\documentclass[pdflatex,sn-basic,Numbered]{sn-jnl}

\usepackage{amsmath,amssymb,amsfonts}
\usepackage{amsthm}
\usepackage{graphicx}
\usepackage{booktabs}
\usepackage{multirow}
\usepackage{tabularx}
\usepackage{array}
\usepackage{algorithm}
\usepackage{algpseudocode}
\usepackage{xcolor}
\usepackage{xurl}
\hypersetup{
  hypertexnames=false,
  pdftitle={Hybrid privacy-aware semantic search: SVD-truncated document geometry and CKKS-encrypted query reranking under a restricted threat model},
  pdfauthor={Sergey Kurilenko},
  pdfsubject={Privacy-aware dense retrieval with SVD, product quantization, and CKKS block-SIMD reranking},
  pdfkeywords={semantic search, dense retrieval, CKKS, homomorphic encryption, product quantization, leakage evaluation}
}

\newcolumntype{Y}{>{\raggedright\arraybackslash}X}

\newcommand{\R}{\mathbb{R}}
\newcommand{\doi}[1]{doi: \nolinkurl{#1}}
\newcommand{\eprint}[2][]{\nolinkurl{#2}}

\theoremstyle{plain}
\newtheorem{proposition}{Proposition}

\begin{document}

\title[Hybrid privacy-aware semantic search]{Hybrid privacy-aware semantic search: SVD-truncated document geometry and CKKS-encrypted query reranking under a restricted threat model}

\author*[1]{\fnm{Sergey} \sur{Kurilenko}}\email{sergkurilenko@gmail.com}

\affil*[1]{\orgname{Moscow Institute of Physics and Technology},
  \orgaddress{\city{Dolgoprudny}, \state{Moscow Region}, \country{Russia}}}

\abstract{Semantic search exposes a practical asymmetry: a query vector may reveal user intent, while returning exact provider vectors distributes reusable representations. We study a deliberately restricted hybrid point. A public corpus-fitted SVD basis and PQ index support client-side candidate selection; candidate IDs are revealed, whereas the projected query is encrypted with CKKS. A block-SIMD layout scores 100 plaintext candidate vectors and returns one ciphertext. At 672 dimensions, it reduces median server time from 1689.1 to 224.5 ms and response size by a factor of 99.5; a provider-only saturation test with pre-encrypted requests reaches 25.99 requests/s with 16 workers. In a frozen post-exploratory revision-analysis subset of 3,235 canonical BEIR queries, supported by 3,230 actual-CKKS records and five zero rows, five of six collections meet a $\pm0.002$ nDCG@10 equivalence rule against plaintext reranking of the identical shortlist, while ArguAna is inconclusive. SVD also outperforms seeded random projection and coordinate truncation in the tested controls. The negative results are equally important: exposed candidate sets are almost perfectly linkable in the disjoint-view audit and reproduce part of the exact neighbourhood, the public PQ artifact reveals approximate corpus geometry, and the implementation applies no output sanitization or circuit-privacy mechanism. The system therefore hides numerical query slots only conditionally on disclosed identifiers and metadata; it does not provide semantic-query, document, or access-pattern privacy.}

\keywords{homomorphic encryption, dense retrieval, query confidentiality, SVD truncation, product quantization, leakage evaluation}

\maketitle

\section{Introduction}\label{sec:introduction}

Dense retrieval looks simple when described as a pipeline: an encoder maps text to vectors, an approximate index produces a shortlist, and a final inner product reranks it. Deployment is less simple because the vectors are not neutral bookkeeping. A query embedding is a numerical trace of user intent, while a provider's document embeddings form a reusable, model-derived index. Depending on the model and data, exact embeddings can support inversion and attribute inference \cite{Song2020EmbeddingLeakage,Li2023GEIA,Morris2023Vec2Text}. Protecting one side by simply handing over the other therefore misses the practical tension.

We study a deliberately narrow compromise between plaintext vector delivery and access-private retrieval. The provider publishes a lossy SVD-projected, product-quantized (PQ) candidate index. An authorized client searches that artifact locally, reveals the selected identifiers, encrypts its projected query under the Cheon--Kim--Kim--Song (CKKS) approximate homomorphic encryption scheme \cite{Cheon2017CKKS}, and decrypts the returned scores. The provider retains the full-precision projected vectors and uses them as plaintext operands. In other words, the interface avoids directly distributing the exact candidate vectors, but it does not make the corpus cryptographically private.

This asymmetry is what the title calls \emph{privacy-aware}. Numerical query values are hidden from an honest-but-curious provider, conditional on the candidate identifiers and ordinary metadata that the same request reveals. The public SVD basis and PQ codes are compression, not encryption \cite{Jegou2011PQ}; they expose approximate geometry. Repeated candidate sets may reveal semantic structure, and an adaptive client can turn exact scores into a linear measurement oracle. We therefore make no claim of document privacy, access-pattern privacy, unlinkability, or malicious-client resistance. The phrase \emph{CKKS-encrypted query reranking} refers to the encrypted query operand and returned scores; provider vectors remain plaintext on the provider host.

Once this boundary is fixed, the engineering question becomes concrete: can the provider score a useful shortlist without paying for and returning one ciphertext per candidate? Our layout repeats the encrypted query in fixed-size blocks, places four provider vectors in one plaintext at the measured $d'=672$ setting, performs a power-of-two SIMD reduction, and sparsely gathers all 100 scores into one response ciphertext. SIMD slots, rotations, and packed linear algebra are established techniques \cite{Smart2014SIMD,Halevi2014HElib,Dathathri2019CHET,Dathathri2020EVA,Zhang2021GALA,Juvekar2018Gazelle}. The contribution lies in arranging them around a public-PQ shortlist, enforcing a public-only provider process, and evaluating the resulting request under an explicit leakage contract.

The evidence follows the path of that request. We first compare per-candidate, grouped, and one-response CKKS under identical parameters, then connect the packed kernel to a separately spawned provider with a serialized public context and a read-only million-vector store. Strong float32, float16, and int8 vector-return baselines keep the performance comparison honest. The evaluation then measures public-index reconstruction, semantic information retained by candidate IDs, adaptive score extraction, concurrency, session setup, and real loopback TCP/TLS framing. At the selected point, median server time falls from 1689.12 to 224.46 ms and the response from 13,121,661 to 131,876 bytes. In the provider-only saturation loop, 16 workers consume a pre-encrypted request pool and sustain a median 25.993 requests/s with no failures. Actual CKKS covers all 3,722 graded requests; a frozen post-exploratory subset disjoint from validation contains 3,235 canonical queries, 3,230 evaluable encrypted records, and five canonical zero rows. Against plaintext reranking of the same candidates, five of six collection-specific comparisons satisfy the $\pm0.002$ equivalence rule; ArguAna remains inconclusive. This combination of positive and negative results defines the contribution more accurately than a broad privacy label would.

\noindent\textit{Relation to earlier versions.} A short preliminary design appeared in \cite{Kurilenko2026Preliminary}, and an earlier version of this manuscript was posted on arXiv under the present title \cite{Kurilenko2026Arxiv}. The journal version changes the emphasis materially: SVD and PQ are treated as disclosed compression rather than document protection, the secret-rotation argument is removed, and the evidence is rebuilt around block-SIMD execution, role-separated processes, strong disclosure baselines, validation-separated post-exploratory revision analysis, projection and PQ controls, service saturation, candidate-ID leakage, and an independently executable parameter transcript.

\section{Background and security boundary}\label{sec:background}

\subsection{Retrieval, compression, and CKKS}\label{subsec:dense-pq}

Let a document encoder produce $d_i\in\R^d$ and a query encoder produce $q\in\R^d$. With row-wise normalization, maximum inner-product search and cosine ranking coincide. Dual-encoder retrieval separates corpus encoding from online query encoding and has become a standard architecture for open-domain search \cite{Karpukhin2020DPR}. We use the multilingual E5 base model \cite{Wang2022E5,Wang2024MultilingualE5}; the systems mechanism itself is encoder-agnostic as long as scoring is a fixed inner product.

Exact search over $N$ vectors costs $O(Nd)$ arithmetic and stores $4Nd$ bytes in float32. Product quantization partitions a vector into $M$ subspaces and replaces each subvector by a centroid identifier \cite{Jegou2011PQ}. With eight bits per subquantizer, a code uses $M$ bytes per document. Asymmetric distance computation compares the uncompressed query with quantized database vectors. Faiss provides efficient exact and approximate implementations, including GPU search, but does not add confidentiality \cite{Johnson2019Faiss}.

Our candidate artifact has $d'=672$, $M=96$, and eight-bit codes. Its serialized size is 96,688,214 bytes for one million items, compared with 2.688 GB of logical float32 projected vectors and 3.072 GB of raw 768-dimensional vectors. This compression enables local candidate selection, but it deliberately publishes a lossy representation. A PQ code reconstructs an approximate projected vector by concatenating selected centroids. Consequently, the artifact cannot be described as a protected database. Section~\ref{subsec:pq-leakage-results} measures this exposure instead of assuming it away.

The projection used before PQ is also public. A passage-only sample fits a mean $\mu$ and a $768\times672$ right-singular-vector matrix $V$. Provider vectors are
\begin{equation}
  x_i=(d_i-\mu)V,
  \label{eq:doc-projection}
\end{equation}
whereas an online query uses
\begin{equation}
  z=qV.
  \label{eq:query-projection}
\end{equation}
No test-query statistic is used. Omitting query centering is deployable and does not change the ranking induced by centering all documents, because $z^\top(d_iV)-z^\top(\mu V)$ differs from the uncentred projected score by a query-dependent constant shared by every document. Projection is a public compression and packing parameter, not a privacy transform.

Compression settles how candidates are found, but it does not settle how they can be scored without exposing the query. CKKS encrypts vectors of approximate real or complex values and supports homomorphic addition and multiplication with controlled approximation error \cite{Cheon2017CKKS}. For polynomial modulus degree $N_{\mathrm{poly}}$, a ciphertext exposes $S=N_{\mathrm{poly}}/2$ complex batching slots. Ciphertext rotations implement cyclic slot permutations when the evaluator has the corresponding Galois keys. Rescaling reduces both ciphertext scale and modulus after multiplication. Security depends on the complete ring-LWE parameter set and implementation, not on application-level latency alone; parameter reporting should follow the Homomorphic Encryption Standard \cite{Albrecht2018HEStandard}.

SIMD batching is foundational HE prior art \cite{Smart2014SIMD}. HElib, SEAL, and subsequent compilers and private-inference systems optimize rotations, tensor layouts, and linear algebra \cite{Halevi2014HElib,Microsoft2023SEAL,Dathathri2019CHET,Dathathri2020EVA}. GAZELLE and GALA demonstrate highly optimized encrypted matrix--vector and dot-product organization in stronger two-party inference settings \cite{Juvekar2018Gazelle,Zhang2021GALA}. Those systems preclude a claim that this paper invents packing, rotation trees, row layouts, or segmented reduction. Our narrower contribution is an implementation-derived layout for four plaintext candidate vectors per encrypted-query ciphertext, followed by sparse packing of scores from 25 groups into one response, integrated with a public candidate index and evaluated as a retrieval request.

The distinction between ciphertext--plaintext and ciphertext--ciphertext multiplication matters. The provider owns its vectors and stores them in plaintext, so it can encode candidates as CKKS plaintexts and multiply them by an encrypted query. This uses no ciphertext multiplication and no relinearization key. A system that also hides database vectors from the evaluator would require a different construction and threat model. We do not compare those two operations as if they supplied the same functionality.

\subsection{Relation to private retrieval}\label{subsec:private-retrieval}

Private semantic search and nearest-neighbour systems cover substantially stronger security points. Tiptoe combines client metadata, linearly homomorphic nearest-neighbour computation, and PIR; it reports 360 million pages across 45 servers, 56.9 MiB communication, and 2.7 s end-to-end latency \cite{Henzinger2023Tiptoe}. Compass traverses an encrypted HNSW through tailored ORAM and protects data, query, returned identifiers and scores, and access patterns in its stated model; it evaluates the 8.84-million-passage MS MARCO corpus and reports about 1.3 s perceived latency at a selected recall point \cite{Zhu2025Compass}. These guarantees are strictly stronger than exposing our client-selected candidate identifiers.

SANNS combines lattice HE, distributed ORAM, garbled circuits, and approximate top-$k$ and reaches 10 million database entries \cite{Chen2020SANNS}. A replicated-server private ANN design obtains sublinear communication under a non-collusion assumption \cite{ServanSchreiber2022PrivateANN}. Pacmann uses client-local graph search and PIR-fetched subgraphs, with preprocessing hints and evaluations up to 100 million SIFT vectors and a 3.2-million-document MS MARCO setting \cite{Zhou2025Pacmann}. Panther co-designs PIR, secret sharing, garbled circuits, and HE for single-server private ANN and reports a 10-million-point configuration \cite{Li2025Panther}. These works rule out ``first private ANN,'' ``first single-server private ANN,'' and ``first million-scale private search'' claims.

PIR can hide which database record is downloaded, but does not by itself perform semantic scoring. SealPIR, SimplePIR/DoublePIR, and OnionPIR quantify different single-server record-retrieval trade-offs \cite{Angel2018SealPIR,Henzinger2023SimplePIR,Mughees2021OnionPIR}. Composing a PIR or ORAM candidate-fetch mechanism with our reranker is future work; the current protocol sends candidate IDs in the clear inside an authenticated transport. Secure kNN has a separate history, from transformed scalar-product databases \cite{Wong2009SecureKNN} to two-party processing of encrypted databases. Kim et al. use Paillier, two non-colluding clouds, encrypted kd-tree pruning, and parallel processing \cite{Kim2022PrivateKNN}. Their database and access-pattern goals are stronger, while our setting is one provider, high-dimensional text vectors, an exposed public artifact, and approximate CKKS arithmetic.

Table~\ref{tab:closest-work} makes the weaker leakage point explicit. ``Hidden'' summarizes the cited system's stated model; it is not a new security analysis. Latencies and scales are not normalized across hardware or workloads and therefore serve only to locate functionality, not to rank systems.

\begin{table*}[t]
\caption{Positioning against closest private-search systems. Ours intentionally exposes the public PQ artifact and candidate IDs. Published scales and costs are descriptive, not directly comparable benchmarks.}\label{tab:closest-work}
\scriptsize
\begin{tabular}{@{}>{\raggedright\arraybackslash}p{0.12\textwidth}
  >{\centering\arraybackslash}p{0.08\textwidth}
  >{\raggedright\arraybackslash}p{0.10\textwidth}
  >{\raggedright\arraybackslash}p{0.13\textwidth}
  >{\raggedright\arraybackslash}p{0.20\textwidth}
  >{\raggedright\arraybackslash}p{0.19\textwidth}@{}}
\toprule
System & Servers & Query value & Candidate and access leakage & Client-visible provider artifact & Reported setting \\
\midrule
Tiptoe \cite{Henzinger2023Tiptoe} & 45 & hidden & hidden by private computation/PIR & client centroid metadata & 360M pages; 2.7 s; 56.9 MiB \\
Compass \cite{Zhu2025Compass} & 1 & hidden & ORAM-protected & encrypted index; client state & 8.84M MS MARCO; about 1.3 s \\
Pacmann \cite{Zhou2025Pacmann} & 1 & hidden & PIR subgraph fetch & preprocessing hint/local graph state & up to 100M SIFT; 3.2M text setting \\
Panther \cite{Li2025Panther} & 1 & hidden & private protocol & output-minimizing private ANN & 10M points; 18 s; 284 MB \\
SANNS \cite{Chen2020SANNS} & $>1$ & hidden & ORAM protocol & no public vector index & 10M entries \\
Kim et al. \cite{Kim2022PrivateKNN} & 2 & hidden & designed to hide accesses & encrypted low-dimensional database & 30K synthetic and 28,056 UCI items \\
Ours & 1 & enc. given IDs & \textbf{revealed} & \textbf{public PQ artifact} & 1M-vector systems stress test; 323.5 ms local-IPC p95 \\
\bottomrule
\end{tabular}
\end{table*}

Recent supercomputing studies provide a complementary systems lens. Wang et al. study encrypted knowledge-graph semantic search for internet-of-vehicles settings \cite{Wang2025EncryptedSemanticIoV}, while Kim et al. protect an encrypted low-dimensional kNN database with a stronger two-cloud design \cite{Kim2022PrivateKNN}. Souror et al. address exact, partial, and fuzzy keyword search through searchable symmetric encryption \cite{Souror2024SmartGridSSE}; those trapdoors should not be conflated with dense inner-product retrieval. Espinoza et al. compare homomorphic matrix multiplication across SEAL, HElib, and OpenFHE and show why parameter, error, memory, and build reporting matter \cite{Espinoza2026HEMatMul}. Our study complements that work with retrieval-specific request accounting: candidate quality, serialized objects, process roles, actual encrypted scores, and a disclosed access pattern. All HE measurements still use one physical host and the SEAL/TenSEAL backend. The Windows/WSL pair is a same-host portability check, so we make no cross-hardware or cross-backend generalization.

Taken together, this literature leaves a useful space between plain vector delivery and fully access-private retrieval. That space is narrower than private search in the usual cryptographic sense, but it is also simpler to deploy. We make the compromise explicit by following the two parties, their assets, and the information visible during a request.

\subsection{Roles, functionality, and leakage}\label{subsec:roles}

The \emph{provider} owns the corpus, payloads, full-precision projected embeddings, and online reranking process. It is willing to publish a compressed PQ artifact but not to distribute the complete exact embedding table through the prescribed API. The \emph{query client} is authorized to search and receive permitted result payloads. It runs the query encoder and public-PQ search, creates the CKKS key pair, retains the secret key, and decrypts scores. The network is untrusted; TLS remains necessary for authentication, integrity, and protection of the otherwise visible identifiers and metadata. CKKS does not replace TLS.

The cryptographic adversary is an honest-but-curious provider process. It follows the circuit, lacks the secret key, and inspects its transcript. We condition the numerical query-value claim on the candidate identifiers and public metadata already revealed in that transcript. The client is not modeled as malicious-resistant: it owns the decryption key and legitimately receives selected scores. A compromised provider host sees its plaintext vector store. Malicious ciphertexts, result substitution, rollback, denial of service, client/provider collusion, endpoint compromise, and forward secrecy are outside the evaluated functionality.

The easiest way to state the intended functionality is to say exactly what crosses the boundary. For a projected query $z$ and client-selected set $C_K=(i_1,\ldots,i_K)$, the ideal numerical functionality returns
\begin{equation}
  F(z,C_K;X)=\left(z^\top x_{i_1},\ldots,z^\top x_{i_K}\right)
  \label{eq:functionality}
\end{equation}
to the client and reveals $C_K$ to the provider. The client sorts these scores and requests authorized payloads. The protocol does not hide the final payload-fetch identifiers if the same provider serves that step. Padding can hide $K$ only to a public bucket, and random permutation can hide candidate order, but neither hides membership of the set.

The provider does not directly transmit $x_i$ under this interface. That is useful when exact vectors are treated as a bulk-distribution asset. It is not cryptographic provider confidentiality: the client has a public approximate reconstruction and an exact linear measurement oracle. The distinction also explains why return-vectors must be a first-class baseline rather than silently excluded.

This functionality is useful only if its omissions are read alongside its disclosures. Table~\ref{tab:leakage} therefore acts as the contract for the rest of the paper. In particular, fresh CKKS randomness changes ciphertext bytes but does not stop repeated candidate sets from making requests linkable.

\begin{table*}[!tp]
\caption{Explicit leakage contract. ``Not provided'' means absent from the prescribed interface, not hidden against every possible attack.}\label{tab:leakage}
\small
\begin{tabularx}{\textwidth}{@{}l Y Y Y@{}}
\toprule
Observer/channel & Protected or not directly provided & Revealed by design & Explicit non-claim \\
\midrule
Provider request view & numerical slots of $z$ inside CKKS & candidate IDs, their order unless permuted, $K$/padding bucket, public parameters, ciphertext count/length, timing and session metadata & semantic query privacy, unlinkability, access-pattern privacy \\
Provider evaluation & plaintext scores remain encrypted; no decryption key in server object & circuit shape, failures, group occupancy, response size, and an unsanitized evaluated ciphertext & circuit privacy, server-operand privacy from a key-holding client, or malicious-server integrity \\
Client onboarding & raw full-precision table not directly provided & encoder and projection metadata, PQ codebooks/codes, IDs, approximate geometry and version & document privacy, database secrecy, inversion resistance \\
Client result view & no additional protected server value & selected IDs, decrypted exact scores, evaluated ciphertext, authorized payloads and timing & resistance to adaptive linear vector extraction or malicious-client inspection \\
External network & ciphertext numerical values; metadata only when TLS protects it & sizes, timing, and endpoints; without TLS, IDs and protocol metadata & protection of the full transcript by CKKS alone \\
Provider compromise & nothing on document side & full exact vectors, payloads, logs, PQ artifact & encryption at rest or document privacy \\
Client/key compromise & nothing after key loss & secret key, retained ciphertexts, scores and local index & forward secrecy supplied by CKKS \\
\bottomrule
\end{tabularx}
\end{table*}

There is a second cryptographic boundary on the return path. Ordinary CKKS input confidentiality against the evaluator does not automatically imply \emph{circuit privacy}: a key-holding recipient of an evaluated ciphertext may be entitled to the output without being entitled to learn anything else about the circuit or its plaintext operands \cite{Li2021ApproximateSecurity,Kluczniak2023CircuitPrivacy}. The implementation performs no output sanitization, noise flooding, or circuit-private bootstrapping. This omission is consistent with the restricted model because malicious clients and document-vector secrecy are already non-goals, but it must still be visible in the contract. Adding a fresh encryption of zero without a parameterized proof would not by itself justify a stronger claim.

The public artifact is another particularly important non-goal. On a deterministic 100,000-document sample, its reconstructed-and-lifted vectors have mean cosine 0.96245 with the original 768-dimensional vectors. This high similarity does not prove text recovery, but it decisively rules out treating PQ as a cryptographic shield. Candidate identifiers also carry semantic structure before any encrypted score is evaluated; Section~\ref{subsec:pq-leakage-results} measures how well the provider can reconstruct and link requests from that disclosed view. Exact scores are then linear observations: if an adaptive authorized client can force a fixed candidate to appear for $d'$ linearly independent projected queries, the client can solve for that vector. Rate limits or candidate authorization can raise the cost of extraction, but they are policies rather than guarantees of the present protocol.

\section{Design and methodology}\label{sec:design}

\subsection{Offline preparation and the online request}\label{subsec:offline}

The provider fixes an encoder version, text preprocessing, pooling rule, normalization, projection, vector dtype, and similarity definition. This avoids a common but invalid shortcut in which test-query statistics are used to center or tune the index. The measured million-vector artifact uses cached 768-dimensional E5 vectors, a passage-only projection basis, and the deployable transformations in Eqs.~\eqref{eq:doc-projection}--\eqref{eq:query-projection}. The basis was fitted to 200,000 provider vectors sampled with seed 0; randomized SVD used seed 42. Its recorded SHA-256 identifier is \texttt{48e95db206d77eea1f2f315c4889c15ac5d3fe00633851c7cdbbf501a071e87c}.

The provider then creates two stores. The exact store is a read-only float32 matrix $X\in\R^{N\times d'}$ used only by the provider process. The measured $1{,}000{,}000\times672$ NPY object occupies 2,688,000,128 bytes including its header. The public candidate artifact is a Faiss \texttt{IndexPQ} with $M=96$ eight-bit subquantizers; it occupies 96,688,214 serialized bytes and has SHA-256 \texttt{479df1b07e1cd3d374923f2cf566af4e2f84af98428f4fc0b1494679d23a45dc}. Every code consumes 96 bytes, and the artifact size grows linearly with corpus size. Codebook training and projection fitting are offline operations; an index-version change requires the client and provider to agree on the new artifact before querying.

For each client key epoch, the client creates a private CKKS context and transmits only serialized encryption/evaluation material: encryption parameters, a public key, and a limited Galois-key set. The provider deserializes those values into a public context. No relinearization key is required because the online circuit uses ciphertext--plaintext multiplication only. Context setup is not counted in warm request latency. The client may cache the public context at the provider for the epoch, but multi-client cache cost and key rotation are not evaluated here.

An online request begins entirely on the client. It encodes and normalizes the user's text, computes $z=qV$, and searches the public PQ artifact for $C_K$. The measured implementation preserves PQ order for deterministic auditing and sends the resulting $K$ 64-bit identifiers. The client then pads $z$ to length $L$, repeats it $B$ times across the CKKS slots, and encrypts the repeated vector once with fresh randomness.

The provider uses those identifiers to gather the corresponding exact rows, arranges them in groups of $B$, and evaluates the packed score circuit with only the public context. After the block reductions, it masks the score slots, shifts group outputs to non-colliding positions, and returns one serialized ciphertext. The client decrypts the designated slots, restores their association with candidate IDs, ranks locally, and requests the authorized payloads. Candidate IDs and the encrypted query travel together, so a deployment still needs authenticated transport. Our byte counts cover the application objects themselves and exclude TLS, HTTP, operating-system pipe, and multiprocessing headers.

\subsection{Block-SIMD kernel, process separation, and correctness}\label{subsec:kernel}

With the leakage boundary fixed, the remaining bottleneck is the apparent need to compute and serialize $K$ encrypted inner products. The block layout below is designed to share the expensive reduction across several candidates and to leave the client with a single response object.

Let $S=N_{\mathrm{poly}}/2$ be the number of CKKS slots and
\begin{equation}
  L=2^{\lceil\log_2 d'\rceil},\qquad B=\left\lfloor\frac{S}{L}\right\rfloor,
  \qquad m=\left\lceil\frac{K}{B}\right\rceil.
  \label{eq:layout-counts}
\end{equation}
For a group $g$ of at most $B$ candidates, define zero-padded vectors
$\bar z,\bar x_i\in\R^L$ and slot vectors
\begin{align}
  Q &= \bar z\,\|\,\bar z\,\|\cdots\|\,\bar z, \\
  P_g &= \bar x_{gB}\,\|\,\bar x_{gB+1}\,\|\cdots\|\,\bar x_{gB+B-1}.
  \label{eq:slot-vectors}
\end{align}
Missing entries in the last group are zero. The client encrypts $Q$ once; each $P_g$ is a provider plaintext.

After a ciphertext--plaintext product and one rescale, rotations by
$1,2,4,\ldots,L/2$ and in-place additions form a power-of-two reduction. Because the client reads only the first slot of each $L$-slot block, cross-boundary values elsewhere need not be retained. There are $\log_2L$ rotations and additions per group. This first stage produces $B$ useful values per group ciphertext, at indices $0,L,2L,\ldots,(B-1)L$.

Returning these $m$ ciphertexts already reduces the naive $K$ outputs by a factor approaching $B$. Our final stage reduces the response to one ciphertext. For group $g$, the provider plaintext-multiplies by a sparse mask that is one at the $B$ score positions and zero elsewhere. It cyclically shifts that ciphertext by $g$ slots, synthesized from the binary decomposition of $g$, and adds it to an accumulator. The client records the resulting indices $(bL-g)\bmod S$. The constraints $K\leq S$ and $m\leq L$ ensure that the implemented mapping has room for one score per candidate without group-offset collision. Algorithm~\ref{alg:kernel} gives the exact structure.

\begin{algorithm}[t]
\caption{Provider-side one-response block-packed reranking}\label{alg:kernel}
\begin{algorithmic}[1]
\Require public CKKS context, encrypted repeated query $c_Q$, candidates $X[C_K]$, $L$, $B$
\State $c_{\mathrm{out}}\gets\bot$
\For{$g=0,\ldots,\lceil K/B\rceil-1$}
  \State encode up to $B$ candidates as $P_g$ in $L$-slot blocks
  \State $c_g\gets\mathrm{Rescale}(c_Q\odot P_g)$
  \For{$r\in\{1,2,4,\ldots,L/2\}$}
    \State $c_g\gets c_g+\mathrm{Rot}(c_g,r)$
  \EndFor
  \State $c_g\gets c_g\odot M_g$ \Comment{sparse score-slot mask; no second rescale}
  \For{set bit $r$ in the binary representation of $g$}
    \State $c_g\gets\mathrm{Rot}(c_g,r)$
  \EndFor
  \State $c_{\mathrm{out}}\gets c_g$ if $g=0$, else $c_{\mathrm{out}}+c_g$
\EndFor
\State \Return serialized $c_{\mathrm{out}}$ and the public slot map
\end{algorithmic}
\end{algorithm}

The measured setting is $N_{\mathrm{poly}}=8192$, hence $S=4096$; $d'=672$ gives $L=1024$ and $B=4$; $K=100$ gives $m=25$. The coefficient-modulus bit sizes are $[60,40,60]$, the initial scale is $2^{40}$, and the ten Galois rotation steps are $1$ through 512 in powers of two. Each primary product has nominal scale $2^{80}$ and is rescaled across the 40-bit middle prime to approximately $2^{40}$. The final sparse masks are encoded at $2^{19}$; their products are approximately $2^{59}$ and are deliberately not rescaled, preserving the last level for rotations and additions.

Table~\ref{tab:ckks-parameters} records the exact key-context values rather than only the requested bit sizes. The three primes are extracted by calling the same \texttt{CoeffModulus.Create} API used by the prototype. Their product has $\log_2 q=159.9999998065$. At degree 8192, SEAL's \texttt{TC128} table permits 218 coefficient-modulus bits, leaving 58 bits of table headroom; this library guard is a parameter check, not an independent attack estimate. Source inspection at the pinned Microsoft SEAL 4.1.1 revision finds independent uniform ternary secret coefficients and exact centred-binomial error with 21 Bernoulli bits per side, support $[-21,21]$, and standard deviation $\sqrt{10.5}=3.24037$.

\begin{table*}[t]
\caption{Exact CKKS key-context and distribution record. Evaluation-key assumptions and independent attack estimates are discussed separately because neither follows from the \texttt{TC128} table check.}\label{tab:ckks-parameters}
\centering
\small
\begin{tabularx}{\textwidth}{@{}lY@{}}
\toprule
Item & Recorded value \\
\midrule
Polynomial degree / complex slots & 8192 / 4096 \\
Coefficient-modulus primes & 1152921504606748673; 1099511480321; 1152921504606830593 \\
Total modulus / data levels & 160 bits; first data level 100 bits; last data level 60 bits \\
Scale and masks & initial $2^{40}$; sparse final masks encoded at $2^{19}$ \\
Secret distribution & independent uniform ternary coefficients in $\{-1,0,1\}$ \\
Error distribution & centred binomial $\mathrm{CBD}(21)$; $\sigma=3.24037$; Gaussian-$3.2$ sensitivity model \\
Evaluation keys & public key and ten selected Galois keys; no relinearization keys \\
Source pins & TenSEAL 0.3.16 binary SHA-256 \nolinkurl{7d17b5864a9929c23c78e9626271f55b7a52e4deda51cfe096f474e7a40a1716}; SEAL source commit \nolinkurl{206648d0e4634e5c61dcf9370676630268290b59} \\
\bottomrule
\end{tabularx}
\end{table*}

The Galois keys are encryptions or key-switching functions of automorphisms of the secret key. Their use therefore retains the standard circular/KDM assumption; the LWE cost estimator does not prove that assumption. It also does not establish circuit privacy for the returned ciphertext. Following the estimator's requested citation \cite{Albrecht2015LWEHardness}, the artifact pins malb's implementation at commit \nolinkurl{3e48ef421ec256afddb3e7d2249a77eab6e9ba12}, uses an unbounded-sample sensitivity model, and archives every Sage input and raw transcript.

The estimator results are reported by cost model rather than collapsed into one security number. Under the ADPS16/GSA rough model, uSVP, dual-hybrid, and plain dual complete at $\log_2(\mathrm{rop})=147.8$, 146.9, and 148.3 for exact CBD(21); the Gaussian-3.2 sensitivity model gives the same rounded exponents. The upstream buffered rough batch reaches its 3600-second limit inside bounded-noise Arora--GB, and the isolated Arora--GB point reaches 1800 seconds without returning a cost. Thus 146.9 is the minimum among the completed rough attacks, not an exhaustive all-attack minimum. Under the estimator's default MATZOV/GSA model, the minimum completed cost is 175.8 for both error models (BDD and BDD-hybrid); uSVP and dual-hybrid give 176.3, plain dual 177.8, and BDD-MITM-hybrid 247.0--247.6. Both coded-BKW points reach their separate 900-second limits without a cost. These model-dependent heuristic estimates exceed 128 bits for every completed attack, but the timeout records, circular/KDM assumption, and implementation side channels remain outside that statement.

Table~\ref{tab:operations} reports counts from the implemented loops, not an asymptotic idealization. The one-response method uses more plaintext multiplications than the 25-output grouped variant because of masking, but serialization falls from 25 objects to one. For $g=0,\ldots,24$, the sum of binary Hamming weights is 54; hence score packing adds 54 rotations and 24 accumulator additions. No ciphertext--ciphertext multiplication or relinearization occurs.

\begin{table}[t]
\caption{Implemented operation counts for $d'=672$, $K=100$, $L=1024$, $B=4$, and $m=25$. Counts exclude encoding and serialization.}\label{tab:operations}
\centering
\small
\begin{tabular}{@{}lrrrrr@{}}
\toprule
Method & ct--pt mult. & Rescale & Rotations & Additions & Output ct \\
\midrule
Per-candidate & 100 & 100 & 1000 & 1000 & 100 \\
Grouped blocks & 25 & 25 & 250 & 250 & 25 \\
One-response & 50 & 25 & 304 & 274 & 1 \\
\bottomrule
\end{tabular}
\end{table}

Packing alone would not establish the claimed role boundary if the same process could still call \texttt{decrypt}. The implementation therefore defines distinct client and server context classes. The client class owns the secret key and decryptor. The server class owns only the SEAL context, encoder, evaluator, public key, and ten selected Galois keys. Serialization includes a \texttt{contains\_secret\_key=false} header and exactly three public payloads: parameters, public key, and Galois keys. Deserialization rejects any envelope that does not declare the public-only role.

The unified benchmark uses the Windows \texttt{spawn} start method rather than forking client memory. The child process receives only a pipe endpoint at process creation. Initialization then sends the serialized public context, the exact-matrix path, the public kernel name, and its packing parameter over the pipe. The server opens the projected NPY matrix read-only. Per-query messages contain a serialized encrypted query and uint64 candidate IDs; replies contain a serialized encrypted score response and phase timings.

The recorded server attestation reports a distinct process identifier, process name \texttt{public-ckks-rerank-server}, class \texttt{ServerCKKSContext}, \texttt{context\_is\_public=true}, \texttt{has\_secret\_key=false}, and no secret-key or decryptor attribute. It received 7,694,330 public-context bytes and opened a read-only $1{,}000{,}000\times672$ float32 matrix. This is a concrete structural and negative-capability test; it is not remote attestation, a sandbox proof, or protection against provider-host compromise.

This organization also makes the cost boundary easy to see. Client-side exhaustive PQ scanning is linear in $N$ and $M$ for the measured \texttt{IndexPQ}. The encrypted reranker is independent of $N$ after candidates are selected: it performs $m$ primary ciphertext--plaintext products, $m\log_2L$ reduction rotations, and a score-packing pass. Exact provider storage remains $O(Nd')$; public client storage is $O(NM)$ bytes plus codebooks. The current architecture therefore exchanges online bandwidth for a substantial persistent client artifact.

The request contains one encrypted repeated query plus $8K$ candidate-ID bytes. The final response contains one ciphertext as long as $K\leq S$ and the non-collision constraints hold. In the measured unified test, request, response, and total application payloads average 236,281, 131,876, and 368,157 bytes. The per-candidate response is 13,121,661 bytes. By contrast, returning projected float32 vectors costs $4Kd'=268{,}800$ response bytes, float16 costs 134,400, and row-wise symmetric int8 plus one float32 scale per row costs 67,600. Encrypted scores are therefore not a bandwidth-dominant solution against all vector-return formats; their purpose is to avoid directly returning candidate vectors under the prescribed interface.

The implementation still has to preserve the intended score semantics after padding, rotation, masking, and shifting. The following proposition isolates that functional claim from both the novelty and security discussion.

\begin{proposition}[Block score semantics]\label{prop:block-score}
Let $z,x_i\in\R^{d'}$, let $L\geq d'$ be a power of two, and let the vectors be zero-padded and arranged as in Eq.~\eqref{eq:slot-vectors}. In exact slot arithmetic, the designated start slot of block $b$ after the power-of-two reduction equals $z^\top x_{gB+b}$.
\end{proposition}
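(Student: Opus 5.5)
We model the slots as a vector $v\in\R^S$ indexed modulo $S$, with $\mathrm{Rot}(v,r)_j=v_{j+r \bmod S}$, which is the left-rotation convention under which the designated score lands at the start of each block. After the ciphertext--plaintext product and rescale, exact slot arithmetic gives the slotwise product $w=Q\odot P_g$. By Eq.~\eqref{eq:slot-vectors}, for $0\le b<B$ and $0\le t<L$,
\begin{equation*}
  w_{bL+t}=\bar z_t\,(\bar x_{gB+b})_t ,
\end{equation*}
and $w_j=0$ for $BL\le j<S$. Here $(\bar x_{gB+b})_t$ is zero both for $t\ge d'$ (zero padding) and for a missing candidate in the last group. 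The plan is to compute the value at slot $bL$ after the loop over $r\in\{1,2,\ldots,L/2\}$ directly.

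The key step is an induction on the loop. After processing rotations $1,2,\ldots,2^{k-1}$ the vector is
\begin{equation*}
  w^{(k)}_j=\sum_{s=0}^{2^k-1} w_{j+s \bmod S},
\end{equation*}
a sliding window sum of width $2^k$. The base case $k=0$ is $w$ itself. For the inductive step, $w^{(k+1)}_j=w^{(k)}_j+w^{(k)}_{j+2^k}$ concatenates two disjoint adjacent windows of width $2^k$. Each rotation amount appears exactly once, so no window is double counted. With $\log_2 L$ steps we obtain $w^{(\log_2 L)}_j=\sum_{s=0}^{L-1}w_{j+s}$.

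Evaluating at $j=bL$ gives a window $bL,\ldots,bL+L-1$. Since $b<B$ and $BL\le S$, this window lies entirely inside block $b$ and involves no modular wraparound. Hence the designated slot equals
\begin{equation*}
  \sum_{t=0}^{L-1}\bar z_t(\bar x_{gB+b})_t=\sum_{t=0}^{d'-1}z_t(x_{gB+b})_t=z^\top x_{gB+b},
\end{equation*}
using that padding contributes zeros. For a missing candidate the value is $0$, consistent with the convention that absent entries are zero. Windows starting elsewhere straddle block boundaries, which the proposition does not need, matching the remark in the text that cross-boundary values are discarded.

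The main obstacle is bookkeeping rather than depth. We must fix the rotation direction convention consistently, since the opposite convention would place the sum at the last slot of each block. We must also verify that the window at $bL$ never wraps modulo $S$, which uses $B=\lfloor S/L\rfloor$. The later masking and shifting steps, with their offsets $(bL-g)\bmod S$, are outside this statement and would be handled by a separate non-collision argument.
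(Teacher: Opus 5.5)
Your proof is correct and follows the same route as the paper: the slotwise product places $z_j x_{gB+b,j}$ (or a padding zero) in block $b$, and the rotate-and-add loop sums exactly those $L$ entries at the block-start slot. Your sliding-window induction, explicit left-rotation convention, and no-wraparound check via $BL\le S$ make rigorous what the paper compresses into ``the binary reduction tree over exactly these $L$ entries.'' The paper additionally remarks, beyond the statement itself, that the later masking and shifting only move the retained values.
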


\begin{proof}
The ciphertext--plaintext product at positions $bL+j$, $0\leq j<d'$, is $z_jx_{gB+b,j}$, and positions $bL+j$ for $d'\leq j<L$ are zero. At the block-start slot, additions after rotations by $1,2,\ldots,L/2$ form the binary reduction tree over exactly these $L$ entries. Thus the value is $\sum_{j=0}^{d'-1}z_jx_{gB+b,j}$. Values in non-designated positions are irrelevant because the sparse mask retains only the starts. Masking, cyclic shifting, and adding group ciphertexts are linear and move, rather than alter, the retained values. CKKS evaluates an approximation to this exact arithmetic; its error is measured separately.
\end{proof}

The proposition is a functional statement, not a novelty or security claim. Unit tests compare each method to NumPy dot products and compare methods to each other. The $d'=672$, $K=100$ microbenchmark has maximum absolute errors $1.12\times10^{-6}$ for per-candidate CKKS, $8.82\times10^{-7}$ for 25-output blocks, and $1.50\times10^{-5}$ for the one-response method. In 400 real projected-query requests, the per-query maximum absolute error has mean $2.30\times10^{-5}$ (95\% bootstrap interval $[2.27,2.32]\times10^{-5}$), p99 $3.03\times10^{-5}$, and global maximum $3.32\times10^{-5}$.

\begin{proposition}[Top-$r$ stability in a fixed shortlist]\label{prop:ranking}
For fixed candidate set $C$, let exact scores be $s_i$ and decrypted scores be $\hat s_i$. If $\max_{i\in C}|\hat s_i-s_i|\leq\epsilon$ and the exact gap between positions $r$ and $r+1$ is greater than $2\epsilon$, then the exact and decrypted top-$r$ sets are identical.
\end{proposition}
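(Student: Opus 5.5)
The plan is a direct separation argument on the sorted exact scores. Order the candidates by exact score, $s_{(1)}\geq s_{(2)}\geq\cdots\geq s_{(|C|)}$, and let $T\subseteq C$ be the exact top-$r$ set. Interpret the gap hypothesis as $s_{(r)}-s_{(r+1)}>2\epsilon$. This hypothesis makes $T$ well defined even if there are ties within the top $r$ or within the remainder, because no tie crosses the boundary between positions $r$ and $r+1$. The goal is to show that every decrypted score inside $T$ strictly exceeds every decrypted score outside $T$. Then the decrypted top-$r$ set is forced to be $T$, whatever tie-breaking rule the client uses when sorting.

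The key step is a two-sided perturbation bound, taken from $\max_{i\in C}|\hat s_i-s_i|\leq\epsilon$:
\begin{itemize}
\item For $i\in T$ we have $s_i\geq s_{(r)}$, so $\hat s_i\geq s_{(r)}-\epsilon$.
\item For $j\in C\setminus T$ we have $s_j\leq s_{(r+1)}$, so $\hat s_j\leq s_{(r+1)}+\epsilon$.
\end{itemize}
Combining the two bounds gives
$$\hat s_i-\hat s_j\geq \bigl(s_{(r)}-s_{(r+1)}\bigr)-2\epsilon>0.$$
The same inequality also shows that no two elements straddling the boundary can tie in the decrypted scores.

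To finish, I would argue that sorting $\hat s$ in descending order places all $r$ elements of $T$ before any element of $C\setminus T$, since every member of $T$ beats every non-member. The first $r$ positions therefore contain exactly $T$. I do not expect a real obstacle here. The only delicate point is making the notion of ``top-$r$ set'' precise under ties, and the strict gap condition resolves it. I would also remark that the proposition says nothing about the order within $T$, only about set equality. For the paper's setting, the bound $\epsilon$ can be instantiated with the measured maximum absolute CKKS error reported after Proposition~\ref{prop:block-score}; Proposition~\ref{prop:block-score} ensures that the $\hat s_i$ approximate the intended scores $s_i=z^\top x_i$ rather than some permuted or shifted quantity.
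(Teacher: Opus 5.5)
Your proof is correct and is essentially the paper's argument: the paper also observes that every exact top-$r$ item $i$ and excluded item $j$ satisfy $s_i-s_j>2\epsilon$, whence $\hat s_i-\hat s_j\geq(s_i-\epsilon)-(s_j+\epsilon)>0$, so no excluded item can cross the boundary. Phrasing the bound through $s_{(r)}$ and $s_{(r+1)}$ and adding the remark on ties only makes that same separation step more explicit.
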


\begin{proof}
For any exact top-$r$ item $i$ and excluded item $j$, $s_i-s_j>2\epsilon$. Therefore $\hat s_i-\hat s_j\geq(s_i-\epsilon)-(s_j+\epsilon)>0$. No excluded item can cross the boundary.
\end{proof}

The condition concerns ranking \emph{within the fixed PQ shortlist}; it says nothing about relevant documents omitted by candidate selection. In the held-out test, the complete top-10 order matches the plaintext projected-shortlist order for 96.25\% of queries (95\% bootstrap interval 94.25--98.0\%). Hit@10 is identical, while Hit@1 differs on one query. We report both IR metrics and numerical error because neither alone proves the other.

Correct scores do not by themselves imply a broad privacy guarantee.
The provider transcript contains the public context, encrypted repeated query, candidate IDs, circuit choice, sizes, and timing. Subject to the security of the instantiated CKKS encryption and excluding side channels beyond this transcript, fresh randomized encryption protects the plaintext numerical slots from the honest-but-curious provider. The claim is conditional because $C_K$ is computed from the same query and is revealed. Two numerically different queries that induce distinctive or repeated candidate sets may be semantically distinguishable through that leakage. We therefore avoid the unqualified phrase ``query privacy.''

To make that qualification empirical, the provider-side audit uses only the ordered candidate IDs and the corresponding exact vectors that the provider already owns. It never reads PQ distances, plaintext scores, a secret key, or a decryptor. Three deliberately simple estimators turn the disclosed list into a query direction: an unweighted set centroid, a logarithmically rank-weighted centroid, and a ridge least-squares fit to fixed rank targets. We evaluate $K\in\{20,50,100,200\}$ by cosine to the true projected query, overlap with its exact top-10 and top-100, recovery of judged or constructed relevant items, and linkability. The linkability test divides each candidate list into two disjoint alternating-ID views and measures how well reconstructed-view cosine separates matching from nonmatching queries. Keeping the set-only and order-aware estimators separate shows what random candidate permutation can and cannot mitigate.

The provider's candidate vectors are plaintext operands and are not protected from the provider, its process, or its storage compromise. The client receives exact approximate-CKKS scores. The protocol supplies neither simulation-based two-sided privacy nor cryptographic database confidentiality. Stronger systems such as Compass, Tiptoe, SANNS, Pacmann, and Panther address different leakage and trust points and cannot be reduced to performance baselines for this functionality.

The restriction is equally important on the client side.\label{subsec:score-oracle}

Suppose an authorized client obtains the score of one fixed projected provider vector $x\in\R^{d'}$ under projected queries $z_1,\ldots,z_t$. Writing the rows as $Z\in\R^{t\times d'}$ and scores as $y\in\R^t$ gives
\begin{equation}
  y=Zx+e,
  \label{eq:oracle-system}
\end{equation}
where $e$ contains CKKS approximation error. If $t\geq d'$ and $Z$ has full column rank, least squares recovers $x$ up to conditioning and error. At $d'=672$, 672 well-conditioned independent observations are algebraically sufficient. Candidate selection may make it difficult to hold the same item in every shortlist, but the protocol does not enforce such a restriction and thus cannot claim resistance.

The public PQ reconstruction reduces the attacker's prior uncertainty before any score query. Defenses could include rate limits, per-document query budgets, candidate authorization, returning only a limited top set, constraining queries to attested encoder outputs, adding calibrated noise, or integrating access-private retrieval. Each changes functionality, utility, or trust. None is implemented here, so none appears in the guarantee.

\subsection{Experimental protocol}\label{subsec:implementation}

The prototype is Python 3.11.15. It uses TenSEAL 0.3.16 and the distributed \texttt{tenseal.sealapi} interface to Microsoft SEAL for evaluator rotations, rather than private extension bindings. NumPy is 2.4.3 and Faiss is 1.13.2. PyTorch 2.10.0 with CUDA 12.8 supports embedding/projection reference work. The host exposes 16 logical CPU threads and an NVIDIA GeForce RTX 5060 with 8,546,484,224 bytes visible to PyTorch. CKKS evaluation is CPU-only; the GPU is not credited with HE acceleration.

The exact projected matrix is memory-mapped. Client and server are separate spawned processes connected by a local multiprocessing pipe. The original unified path uses this topology to test key separation and real serialization; its online latency excludes context generation, public-context initialization, cached query-embedding inference, TLS, and WAN/LAN transfer. The microbenchmark reports context/key generation separately at 265.76 ms and query encryption/serialization at 15.69 ms; its serialized public context and query are 7,694,632 and 235,541 bytes.

A separate systems run removes two of those earlier evidence gaps without pretending to be a WAN study. It fixes the OpenMP, MKL, OpenBLAS, NumExpr, and BLIS thread pools to one, gives every closed-loop client its own independently spawned public-only worker, and measures concurrency $1,2,4,8,$ and $16$. The saturation loop reuses a pool of already encrypted requests so that throughput measures provider evaluation and serialized pipe responses rather than client encryption or decryption. Three packed-kernel restarts alternate ascending and descending condition order; each condition executes three warmups and 20 requests per worker. Two smaller naive-kernel restarts cover concurrency one and eight. The same harness samples process CPU time and worker working sets, repeats cold and persistent full-client sessions five times, and sends the exact serialized protocol over persistent TCP and TLS~1.3 loopback sockets for 30 requests after three warmups. Loopback isolates socket, framing, and TLS record overhead; it does not emulate physical-network RTT, loss, congestion, or certificate deployment.

The coefficient-modulus chain, scales, and selected rotations are fixed across the three kernel methods. Each microbenchmark point performs two warmups; the headline $d'=672$, $K=100$ point uses 20 measured repetitions. The dimension and shortlist sweep uses seven repetitions per additional point. A separate unified run evaluates every one of 400 held-out query requests with actual CKKS. Five additional spawned-server restarts each execute the same 100 validation queries to expose process-to-process variation.

The evaluation uses two complementary tracks. The first is a scale anchor comprising one million cached Wikipedia passage embeddings of dimension 768 and 500 cached self-query embeddings. The qrel recipe associates each query with one deterministic sampled source passage using seed 42. The deterministic query split uses seed 2026: 100 queries are validation data and 400 are held out for test. Because each query has one constructed source target rather than graded human relevance judgments, Hit@1, Hit@10, MRR@10, and nDCG@10 measure self-retrieval stability only. They must not be interpreted as production semantic relevance.

The stress test nevertheless exercises the claimed systems path at full scale. Candidate search scans the one-million-entry public PQ object. The provider gathers from the full projected memmap. Actual CKKS processes all 100 candidates for every held-out query. The exact projected exhaustive search is a GPU-assisted local quality reference; its batch-amortized time is not mixed into the online client/server latency. The projected-shortlist plaintext computation is an untimed numerical reference for CKKS. The post-protocol auditor may read exact candidate rows after the request solely to calculate error; this is not a deployable client action.

The projection controls ask whether the SVD in the title contributes more than dimension truncation alone. At $d'\in\{384,672\}$, the frozen corpus-only SVD prefix is compared with the first $d'$ coordinates and with a seeded Gaussian orthoprojector. The random matrix uses seed 2026, is orthonormalized once at 672 columns, and is evaluated through nested prefixes; documents are centred by the same passage mean and online queries are not centred. Neither control sees a query label. Separate validation and revision-analysis branches are retained, and the already selected 672-dimensional operating point is not changed from these rows.

The broader dimension sweep uses nested SVD prefixes at $d'\in\{192,256,384,512,672\}$ and the full centred 768-dimensional geometry. Its pure exact-retrieval branch changes only retained dimension. A second, explicitly joint storage curve trains eight-bit PQ with $M=d'/8$, seed 42, and $K=100$, thereby fixing an eight-coordinate subvector width while code bytes grow with dimension. This joint branch must not be read as a pure SVD ablation. Around the frozen $d'=672$, $M=96$, $K=100$ deployment, sensitivity runs use $M\in\{32,48,84,96\}$ and $K\in\{20,50,100,200\}$. FiQA and TREC-COVID additionally repeat $M\in\{84,96\}$ with seeds 17, 42, and 2026; all seeds are reported without best-seed substitution.

Shortlist size is selected only on validation. We compare $K\in\{20,40,64,80,100,200\}$ and freeze $K=100$ before opening the test metrics. The validation candidate-source recall increases from 0.92 at $K=20$ to 0.96 at $K=100$ and does not improve at $K=200$. The choice trades a plateaued diagnostic recall against server work; it is not tuned on the 400 test queries.

The second track asks whether the same approximate geometry preserves graded retrieval quality on six heterogeneous BEIR collections with official qrels: SciFact, NFCorpus, ArguAna, SciDocs, FiQA-2018, and TREC-COVID \cite{Thakur2021BEIR}. The harness loads pinned corpus, query, and qrel revisions, records checksums, and never treats a smoke subset as benchmark-comparable. Inputs to \texttt{intfloat/multilingual-e5-base} use literal \texttt{query:} and \texttt{passage:} prefixes, masked-mean pooling over non-padding tokens, row-wise $\ell_2$ normalization, and maximum length 512. Passage title and body are concatenated under one fixed recipe. Encoding uses the RTX 5060; that stage is reported separately from CPU HE.

Validation follows the benchmark rather than silently borrowing from the final rows. FiQA-2018 and NFCorpus use their official development queries, while SciFact uses official training queries. ArguAna, SciDocs, and TREC-COVID have no suitable separate development split in the pinned recipe, so their official-test query IDs are partitioned deterministically into 20\% validation and 80\% revision-analysis sets with seed 2026. The frozen report uses 300 official-test SciFact, 323 official-test NFCorpus, 648 official-test FiQA-2018, 800 held-out SciDocs, 40 held-out TREC-COVID, and 1,124 canonical held-out ArguAna queries. Projection fitting and PQ training use corpus embeddings only; no query mean or qrel label enters either artifact.

The six frozen revision-analysis sets contain 3,235 canonical queries. Five ArguAna queries have judged-positive document identifiers absent from the pinned official corpus archive. They remain in the primary 1,124-query denominator with zero contribution, while the separately labeled evaluable sensitivity analysis contains 1,119 queries. The CKKS subset analysis selects 3,230 independently executed per-query records from the existing 3,722-request replay; it does not rerun them, and the implementation has no cross-query training, aggregation, or cryptographic state. The five absent-document rows contribute the same zero to both sides of the canonical utility comparison. Full-test rows that overlap validation remain in the artifact for arithmetic and systems provenance but are not used as revision-analysis utility evidence.

The harness compares raw exact retrieval, projected exact retrieval, PQ ordering, and exact projected reranking of the same PQ candidates. It computes nDCG@10 with graded gains, MRR@10, Recall@10, Recall@100, and paired 10,000-sample query-bootstrap intervals. The revision analysis plan was frozen before calculating the strict subsets, although it was necessarily written after the earlier exploratory full-test analysis and is not presented as prospective preregistration. Its operational non-inferiority margin is an absolute nDCG@10 degradation of 0.002. This is a frozen engineering tolerance, not a standard BEIR or perceptibility threshold; the raw intervals are retained so that another margin can be applied. Non-inferiority is reported only when the lower endpoint of the paired 95\% interval exceeds $-0.002$; equivalence requires the complete interval to lie inside $[-0.002,0.002]$. Merely containing zero is never treated as evidence of equivalence. Decisions are collection-specific and all six are reported; no pooled hypothesis or ``at least one success'' claim is made, so the five-of-six count is a descriptive summary rather than a multiplicity-adjusted global test.

Both tracks use the same conceptual ladder of baselines. \emph{Exact projected} retrieval exhaustively ranks the provider's projected vectors and serves only as a local utility reference, while \emph{PQ-only} exposes the ordering already available from the public index. The strongest performance alternative is \emph{return vectors}: the client sends the same candidate IDs and receives raw or projected vectors in float32/float16, or projected vectors under row-wise symmetric int8 quantization. These methods are disclosure baselines, not privacy protocols.

To isolate arithmetic effects, a plaintext projected-shortlist computation supplies paired exact scores. Naive CKKS then returns one ciphertext per candidate, whereas the grouped and one-response variants use the block-SIMD layout of Section~\ref{subsec:kernel}. The one-response variant is the deployed path.

Every online baseline uses the same frozen candidate IDs when compared on the test split. Utility reports per-query Hit@1/10, MRR@10, and nDCG@10 with 95\% paired query-bootstrap intervals from 10,000 resamples. Numerical evaluation reports maximum and mean absolute error, RMSE, and exact top-10-order match. Systems evaluation reports request-level means and p50/p95/p99, stage timings, ciphertext counts, and application bytes. Latency quantiles are computed from complete requests; we never sum stage p95 values. Five process restarts assess validation latency variability, but repeated validation queries are used only for systems latency, not as independent utility observations.

Reproducibility is part of this protocol rather than an afterthought. The artifact records raw per-query JSONL logs and machine-readable summaries for retrieval, CKKS microbenchmarks, the spawned-process path, five restarts, vector-return baselines, scaling, reconstruction, and every official-test replay. An environment manifest pins package versions, GPU visibility, hardware, and the source revision; projection and PQ objects carry shapes, seeds, paths, and hashes. Unit tests cover slot semantics, public/private context roles, malformed envelopes, deterministic splits, numerical agreement, and metric calculations.

The review bundle also freezes the BEIR result map, dataset and qrel revisions, the E5 model revision \texttt{d128750597153bb5987e10b1c3493a34e5a4502a}, cache manifests, and a command for each replay. Derived NPY, NPZ, and Faiss objects are keyed by the model recipe and corpus/query hashes. Until an immutable public tag and archival DOI are created, these supplied hashes are the reproducibility anchors.

\section{Results}\label{sec:evaluation}

\subsection{Choosing the operating point and evaluating the packed kernel}\label{subsec:validation-selection}

The two evaluation tracks should be read together. The million-document experiment stresses the complete systems path at a fixed operating point, whereas the BEIR collections test graded retrieval quality across heterogeneous tasks. We first check that the projection choice is substantive, then revisit the validation-only shortlist decision and isolate the kernel that dominates the encrypted stage.

Before turning to that shortlist, Table~\ref{tab:projection-controls} checks the projection itself on the frozen post-exploratory BEIR subsets. SVD has the highest nDCG@10 in every collection at both dimensions. At 384 dimensions its advantage over the stronger of the two non-SVD controls ranges from .0535 to .1823; at 672 dimensions the range narrows to .0084--.0255. The table does not show that SVD is universally optimal, but it rules out the easier explanation that any seeded linear truncation would have produced the reported utility.

\begin{table*}[t]
\caption{Projection controls on the frozen BEIR revision-analysis subsets. Values are exhaustive projected nDCG@10; no PQ or CKKS arithmetic is used in this table. Rnd is the seed-2026 Gaussian orthoprojector and Coord is first-coordinate truncation.}\label{tab:projection-controls}
\centering
\small
\begin{tabular}{@{}lrrrrrr@{}}
\toprule
& \multicolumn{3}{c}{$d'=384$} & \multicolumn{3}{c}{$d'=672$} \\
\cmidrule(lr){2-4}\cmidrule(lr){5-7}
Collection & SVD & Rnd & Coord & SVD & Rnd & Coord \\
\midrule
ArguAna & .43144 & .25561 & .30513 & .44178 & .42203 & .41406 \\
FiQA-2018 & .36865 & .19259 & .21876 & .38012 & .34573 & .36269 \\
NFCorpus & .32297 & .22110 & .24183 & .32703 & .31161 & .30485 \\
SciDocs & .16278 & .10013 & .10925 & .16614 & .15772 & .15763 \\
SciFact & .68846 & .55295 & .53245 & .69678 & .67697 & .66195 \\
TREC-COVID & .67220 & .48986 & .43017 & .68670 & .65199 & .66123 \\
\bottomrule
\end{tabular}
\end{table*}

The complete sweep in Table~\ref{tab:svd-pareto} turns that control into a systems choice. Retained corpus variance rises quickly and exact macro nDCG@10 begins to plateau between 512 and 672 dimensions. On validation, moving from 672 to 512 loses .00122 macro nDCG in the pure exact branch and .00078 in the joint PQ branch. In return, the PQ code falls from 84 to 64 bytes per document and CKKS padding falls from 1024 to 512 slots per candidate, doubling block occupancy from four to eight. We therefore report 512 dimensions as a latency/storage-favouring alternative and retain the already frozen 672-dimensional deployment as the utility-favouring point. The post-exploratory revision rows document robustness; they are not used to reselect the deployment.

\begin{table*}[t]
\caption{SVD dimension and joint storage Pareto macro-averaged over six BEIR collections. Exact is the pure nested-SVD branch. PQ+rerank fixes $M=d'/8$, seed 42, and $K=100$; it is therefore a joint projection/storage curve. ``Revision'' denotes the frozen post-exploratory subset.}\label{tab:svd-pareto}
\centering
\scriptsize
\resizebox{\textwidth}{!}{%
\begin{tabular}{@{}rrrrrrrrr@{}}
\toprule
$d'$ & Retained variance & Exact validation nDCG & Exact revision nDCG & PQ+rerank revision nDCG & Candidate R@100 & PQ bytes/doc & CKKS $L$ & Candidates/block \\
\midrule
192 & .7828 & .4049 & .4081 & .4092 & .4917 & 24 & 256 & 16 \\
256 & .8578 & .4306 & .4270 & .4281 & .5034 & 32 & 256 & 16 \\
384 & .9499 & .4517 & .4411 & .4435 & .5127 & 48 & 512 & 8 \\
512 & .9881 & .4560 & .4470 & .4472 & .5104 & 64 & 512 & 8 \\
672 & .9986 & .4572 & .4498 & .4490 & .5113 & 84 & 1024 & 4 \\
768 & 1.0000 & .4583 & .4512 & .4478 & .5140 & 96 & 1024 & 4 \\
\bottomrule
\end{tabular}
}
\end{table*}

Figure~\ref{fig:svd-pareto} keeps the collection-level curves visible rather than letting a macro-average hide heterogeneous behaviour. The random and coordinate controls appear only at their measured dimensions; the dashed joint curve uses the fixed eight-coordinate PQ subvector rule.

\begin{figure*}[t]
\centering
\includegraphics[width=\textwidth]{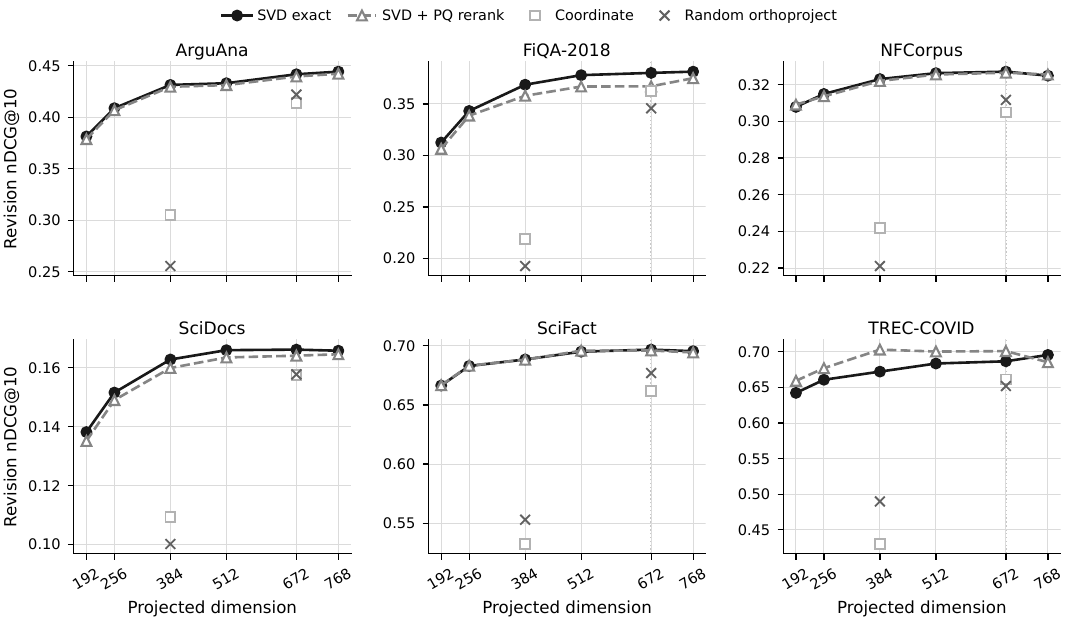}
\caption{Collection-level SVD/PQ Pareto on the frozen post-exploratory BEIR revision subsets. Solid circles are exhaustive SVD retrieval; dashed triangles add PQ candidate selection and exact projected reranking with $M=d'/8$. Coordinate and random controls are shown at 384 and 672 dimensions. The vertical guide marks the frozen 672-dimensional point; it is not selected from these revision rows.}
\label{fig:svd-pareto}
\end{figure*}

The local sensitivity analysis in Table~\ref{tab:pq-sensitivity} supports two smaller conclusions. First, $K=100$ remains a strong knee: doubling encrypted work to $K=200$ adds only .0010 macro nDCG@10, although it increases candidate coverage. Second, $M=84$ approaches the frozen $M=96$ quality while reducing code bytes per document by 12.5\%; it is a plausible storage alternative, not a retrospectively substituted winner. Across three seeds, FiQA's nDCG sample standard deviation at $K=100$ is .00136 for $M=84$ and .00208 for $M=96$. TREC-COVID's 40-query revision subset yields the larger .01094 and .01196 dispersions, while candidate-recall standard deviations remain below .0016.

\begin{table*}[t]
\caption{PQ sensitivity at $d'=672$, macro-averaged over six frozen BEIR revision subsets. Left: $K$ sensitivity at $M=96$; candidate recall uses all available $K$ results, while reranked Recall@100 uses at most 100. Right: $M$ sensitivity at $K=100$; storage sums the six serialized indices.}\label{tab:pq-sensitivity}
\centering
\small
\resizebox{\textwidth}{!}{%
\begin{tabular}{@{}rrrr rrrr@{}}
\toprule
$K$ & nDCG@10 & Candidate R@$K$ & Reranked R@100 & $M$ (bytes/doc) & nDCG@10 & Candidate R@100 & Six-index size (MB) \\
\midrule
20  & .4352 & .3962 & .3962 & 32 & .4166 & .4397 & 12.84 \\
50  & .4453 & .4643 & .4643 & 48 & .4373 & .4722 & 17.19 \\
100 & .4506 & .5148 & .5148 & 84 & .4490 & .5113 & 26.99 \\
200 & .4516 & .5735 & .5466 & 96 & .4506 & .5148 & 30.25 \\
\bottomrule
\end{tabular}
}
\end{table*}

Table~\ref{tab:k-validation} then shows the complete shortlist-size decision made on the 100-query validation split. PQ-only ordering is unchanged once the requested output depth is below every tested $K$, whereas candidate recall and exact reranking can improve as more candidates become available. Recall reaches 0.96 at $K=100$ and remains 0.96 at $K=200$. The return-raw-vector and projected-rerank rows also plateau on Hit@10 at 0.94. We therefore select $K=100$ rather than doubling encrypted work for no observed validation recall gain.

\begin{table*}[t]
\caption{Validation-only shortlist sweep on the one-million-vector self-retrieval diagnostic. Values are means over 100 queries. This table was used to freeze $K=100$ before the held-out test.}\label{tab:k-validation}
\centering
\small
\resizebox{\textwidth}{!}{%
\begin{tabular}{@{}r cccc cccc@{}}
\toprule
& \multicolumn{4}{c}{Return raw float32 vectors} & \multicolumn{4}{c}{Exact projected shortlist rerank} \\
\cmidrule(lr){2-5}\cmidrule(lr){6-9}
$K$ & Hit@1 & Hit@10 & MRR@10 & Candidate recall & Hit@1 & Hit@10 & MRR@10 & Candidate recall \\
\midrule
20  & .83 & .91 & .86333 & .92 & .83 & .92 & .86583 & .92 \\
40  & .83 & .92 & .86311 & .93 & .83 & .92 & .86144 & .93 \\
64  & .83 & .92 & .86533 & .94 & .83 & .92 & .85950 & .94 \\
80  & .83 & .93 & .86676 & .95 & .82 & .93 & .85593 & .95 \\
100 & .83 & .94 & .86926 & .96 & .82 & .94 & .85843 & .96 \\
200 & .83 & .94 & .86926 & .96 & .82 & .94 & .85819 & .96 \\
\bottomrule
\end{tabular}
}
\end{table*}

The exact projected exhaustive validation reference has Hit@1 0.83 (95\% CI 0.76--0.90), Hit@10 0.95 (0.90--0.99), MRR@10 0.86468 (0.80309--0.92301), and nDCG@10 0.88502 (0.83042--0.93577). At the selected $K$, the projected shortlist has nDCG@10 0.87818 (0.82230--0.93026), a paired change of $-0.00685$ ($[-0.01807,0.00081]$). This uncertainty is large because the validation set is small; the selection rule is the candidate-recall plateau, not a claim of equivalence.

With $K$ fixed, the next comparison isolates the three CKKS implementations.\label{subsec:kernel-results}

Table~\ref{tab:microbenchmark} compares three implementations using the same query, 100 candidate vectors, CKKS parameters, rotation-key set, and host. The total server measurement includes query deserialization, plaintext encoding, HE evaluation, ciphertext serialization, and response framing. ``HE core'' isolates the reduction evaluation; the one-response row additionally spends a median 35.02 ms on score masks, group shifts, and accumulation.

\begin{table*}[t]
\caption{Identical-parameter CKKS microbenchmark at $d'=672$, $K=100$ (20 repetitions after two warmups). Times are milliseconds; response is application bytes.}\label{tab:microbenchmark}
\centering
\small
\resizebox{\textwidth}{!}{%
\begin{tabular}{@{}l rrrrrrr@{}}
\toprule
Method & Server p50 & Server p95 & HE-core p50 & Client decrypt p50 & Client decrypt p95 & Response bytes & Max abs. error \\
\midrule
Per-candidate, 100 ct & 1689.12 & 1749.77 & 577.01 & 1011.45 & 1088.72 & 13,121,661 & $1.12\times10^{-6}$ \\
Grouped blocks, 25 ct & 433.39 & 453.88 & 145.31 & 253.00 & 329.34 & 3,280,904 & $8.82\times10^{-7}$ \\
One-response, 1 ct & \textbf{224.46} & \textbf{247.29} & 139.45 & \textbf{22.84} & \textbf{35.56} & \textbf{131,876} & $1.50\times10^{-5}$ \\
\bottomrule
\end{tabular}
}
\end{table*}

The one-response method is $7.525\times$ faster in median total server time than per-candidate CKKS and reduces response size by $99.50\times$, saving 99 ciphertext objects and 12,989,785 bytes. Comparing only HE evaluation plus the additional score-packing stage gives a smaller $3.307\times$ median speedup; the larger total improvement comes from avoiding 99 ciphertext serializations. The 25-output grouped method is $3.897\times$ faster in total server time and almost exactly $4\times$ smaller on the wire than naive CKKS; relative to that grouped response, final packing saves a further 3,149,028 bytes. This decomposition is important: the result is a packing and object-count optimization, not an unexplained cryptographic acceleration.

A back-to-back portability control reruns the same $d'=672$, $K=100$, 20-repeat script on the same physical CPU under native Windows and Ubuntu 22.04 in WSL2. Table~\ref{tab:cross-os} keeps the comparison deliberately modest. TenSEAL 0.3.16, the coefficient moduli, scale, operation counts, and 131,876-byte packed response are identical, but the surrounding runtimes are not: Windows uses Python 3.11.15 and NumPy 2.4.3, whereas WSL uses Python 3.10.12 and NumPy 2.2.6 from binary wheels. Each runtime agrees with its own NumPy reference within the stated tolerance; we do not directly compare decrypted score vectors across runtimes. The result therefore demonstrates Linux execution and correctness, not an independent hardware replication or a controlled estimate of an operating-system effect.

\begin{table*}[t]
\caption{Same-CPU Windows/WSL portability control. Times are provider milliseconds; speedup is naive p50 divided by one-response p50. The primary Windows result remains Table~\ref{tab:microbenchmark}.}\label{tab:cross-os}
\centering
\scriptsize
\resizebox{\textwidth}{!}{%
\begin{tabular}{@{}lrrrrrr@{}}
\toprule
Runtime & Packed p50 & Packed p95 & Naive p50 & Naive p95 & Speedup & Packed max error \\
\midrule
Windows 11, Python 3.11 & 195.17 & 203.93 & 1712.72 & 1793.80 & $8.775\times$ & $1.50\times10^{-5}$ \\
Ubuntu 22.04 WSL2, Python 3.10 & 169.96 & 172.89 & 607.57 & 635.38 & $3.575\times$ & $1.47\times10^{-5}$ \\
\bottomrule
\end{tabular}
}
\end{table*}

Client-side savings are also material. Decrypting one response has 22.84 ms median latency, versus 1011.45 ms for 100 objects. The final sparse-mask stage increases maximum absolute error to $1.50\times10^{-5}$, still within the looser tested tolerance of relative $3\times10^{-4}$ and absolute $3\times10^{-5}$. The maximum difference between naive and one-response decrypted scores is $1.46\times10^{-5}$. These are measured numerical facts, not a universal error bound.

\subsection{Million-vector requests and scaling}\label{subsec:unified-results}

The kernel result is encouraging, but an isolated circuit is not yet a retrieval request. The unified run freezes $d'=672$, $M=96$, eight-bit PQ, $K=100$, and the one-response kernel, then executes all 400 held-out self-query requests. Table~\ref{tab:heldout-utility} separates losses by stage. PQ ordering alone is materially worse than exact projected exhaustive ranking; exact reranking recovers most of that gap, and CKKS then closely tracks the plaintext shortlist reference.

\begin{table*}[t]
\caption{Held-out self-retrieval utility over 400 queries and one million documents. Parentheses are 95\% query-bootstrap intervals from 10,000 resamples. These are diagnostic, single-relevant-item metrics, not graded semantic relevance.}\label{tab:heldout-utility}
\centering
\scriptsize
\resizebox{\textwidth}{!}{%
\begin{tabular}{@{}l cccc@{}}
\toprule
Method & Hit@1 & Hit@10 & MRR@10 & nDCG@10 \\
\midrule
Exact projected exhaustive & .8225 (.7850,.8600) & .9275 (.9000,.9525) & .85650 (.82439,.88795) & .87355 (.84429,.90231) \\
Public PQ only & .7575 (.7150,.8000) & .8750 (.8425,.9075) & .79602 (.75866,.83267) & .81494 (.77995,.84950) \\
Plaintext projected shortlist & .8250 (.7875,.8625) & .9175 (.8900,.9450) & .85743 (.82553,.88888) & .87207 (.84227,.90133) \\
One-response CKKS shortlist & .8225 (.7850,.8600) & .9175 (.8900,.9450) & .85618 (.82441,.88772) & .87115 (.84145,.90062) \\
Return raw float32 vectors & .8325 (.7950,.8675) & .9200 (.8925,.9450) & .86106 (.82901,.89224) & .87528 (.84537,.90435) \\
\bottomrule
\end{tabular}
}
\end{table*}

Candidate recall@100 is 0.935 (95\% CI 0.910--0.9575). Relative to the plaintext projected shortlist on exactly the same candidates, CKKS changes Hit@1 by $-0.0025$ ($[-0.0075,0]$), Hit@10 by 0 ($[0,0]$), MRR@10 by $-0.00125$ ($[-0.00375,0]$), and nDCG@10 by $-0.000923$ ($[-0.002768,0]$). Relative to exhaustive projected retrieval, the plaintext shortlist nDCG change is $-0.001482$ ($[-0.005535,0.002314]$). The comparison indicates that candidate selection, not CKKS arithmetic, dominates the observed quality difference in this diagnostic.

The numerical audit covers all 40,000 candidate scores. Per query, mean absolute error averages $7.48\times10^{-6}$ (95\% CI $[7.39,7.57]\times10^{-6}$), RMSE averages $9.48\times10^{-6}$ ($[9.37,9.59]\times10^{-6}$), and maximum absolute error averages $2.30\times10^{-5}$ ($[2.27,2.32]\times10^{-5}$). The maximum over the run is $3.32\times10^{-5}$. Exact top-10 ordering is retained for 0.9625 of queries (0.9425--0.9800). These values validate the implemented score circuit on the observed distribution; they do not bound future encoders or parameter sets.

Table~\ref{tab:unified-latency} reports complete local-IPC requests. The online end-to-end measurement begins before query projection and ends after client ranking. It includes client PQ search, encryption and serialization, pipe round-trip, provider gather and CKKS work, response parsing, and decryption. It excludes cached text encoding, context setup, network transport, and the post-protocol plaintext audit.

\begin{table}[t]
\caption{Held-out local spawned-process latency for 400 requests, in ms. Quantiles are taken from whole-request or whole-stage samples, never summed.}\label{tab:unified-latency}
\centering
\small
\begin{tabular}{@{}lrrrr@{}}
\toprule
Stage & Mean & p50 & p95 & p99 \\
\midrule
Query projection & 0.251 & 0.242 & 0.337 & 0.458 \\
Client PQ scan & 25.024 & 24.905 & 28.067 & 31.753 \\
Encrypt + serialize & 12.844 & 11.038 & 18.383 & 24.814 \\
Server gather & 28.871 & 28.265 & 47.759 & 55.850 \\
Server HE evaluate & 138.733 & 138.489 & 152.270 & 167.164 \\
Server score pack & 36.121 & 36.067 & 42.524 & 46.953 \\
Server total & 240.693 & 239.025 & 273.936 & 300.501 \\
Client decrypt/decode & 8.026 & 7.498 & 10.500 & 18.949 \\
\textbf{Online end-to-end} & \textbf{287.564} & \textbf{285.530} & \textbf{323.502} & \textbf{363.976} \\
\bottomrule
\end{tabular}
\end{table}

The mean online latency has a request-bootstrap 95\% interval of 284.57--291.22 ms. Mean server total is 240.69 ms (238.37--243.28 ms). The p50/p95/p99 pipe round trip is 239.54/274.47/301.02 ms, which is close to server total because both processes are local. These values must not be reported as network latency. The request size varies slightly with CKKS serialization: mean 236,281 bytes, p95 236,348, and maximum 236,391. The response is exactly 131,876 bytes for every query. Mean total application payload is 368,157 bytes. Operating-system pipe and Python object headers are excluded by the logged payload definition.

Five independent spawned-server validation restarts provide a first robustness check. Across all 500 request records, online p50/p95/p99 is 256.24/292.87/309.44 ms, with mean 258.90 ms. The five per-restart p95 values range from 278.20 to 308.27 ms; p99 ranges from 291.39 to 388.67 ms. One restart contains a 626.13 ms request and visibly increases the tail. That variability motivated the controlled saturation run in Table~\ref{tab:concurrency} rather than a single best-case throughput number.

\begin{table*}[t]
\caption{Closed-loop packed-CKKS saturation on one 10-core/16-thread host. QPS is the median of three restarts; latency pools complete pipe requests. RSS is the maximum summed worker working set and may count shared pages more than once. All 1,860 requests succeeded.}\label{tab:concurrency}
\centering
\small
\resizebox{\textwidth}{!}{%
\begin{tabular}{@{}rrrrrrr@{}}
\toprule
Concurrent clients & QPS & p50 (ms) & p95 (ms) & p99 (ms) & Peak RSS (MiB) & Effective busy cores \\
\midrule
1  & 4.255  & 213.23 & 326.57 & 358.39 & 84.2   & 0.82 \\
2  & 8.273  & 218.55 & 347.07 & 464.38 & 168.8  & 1.61 \\
4  & 14.325 & 258.22 & 378.30 & 414.36 & 338.7  & 3.39 \\
8  & 22.124 & 347.36 & 462.91 & 545.16 & 677.3  & 6.81 \\
16 & 25.993 & 568.23 & 727.43 & 846.09 & 1356.7 & 11.87 \\
\bottomrule
\end{tabular}
}
\end{table*}

Throughput grows by $6.11\times$ from one to 16 clients, while the latency and effective-core columns show saturation rather than linear scaling. The 16-client point matches the host's logical-thread count and raises p50 to 568.23 ms; adding further workers was not tested. Peak summed worker RSS grows almost linearly and remains about 1.36 GiB at that point. The naive per-candidate control reaches only 0.426 QPS at concurrency one and 2.460 QPS at concurrency eight, versus 4.255 and 22.124 QPS for packing. Thus the packed path delivers about $10.0\times$ and $9.0\times$ more throughput at those two controlled points. None of the 72 naive requests failed.

\begin{figure*}[t]
\centering
\includegraphics[width=\textwidth]{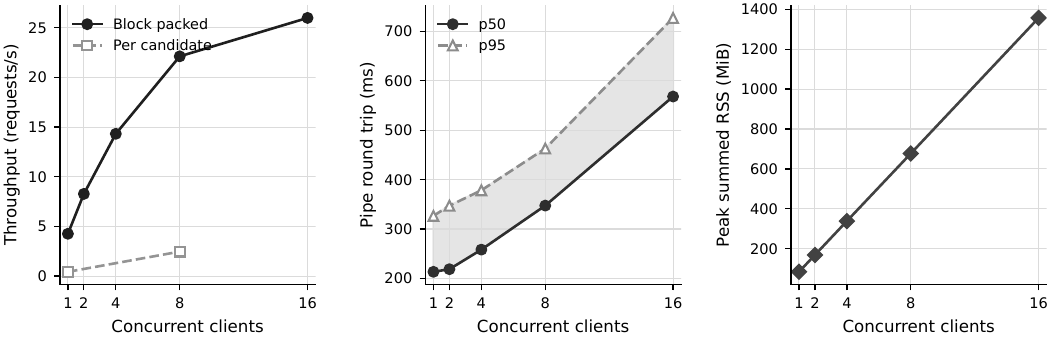}
\caption{Provider-side saturation on one physical host. Packing raises throughput substantially relative to per-candidate CKKS, but the growing p50/p95 gap and almost linear worker-memory growth make the onset of saturation visible rather than hiding it behind a single QPS value.}
\label{fig:system-scaling}
\end{figure*}

Figure~\ref{fig:system-scaling} makes the throughput--tail-latency--memory trade-off visible across the same packed conditions. Session setup is visible but amortizable. The public envelope used in the amortization calculation is 7,693,580 bytes: 80 bytes of parameters, 366,322 bytes of public key, 7,326,779 bytes of Galois material, and 399 bytes of envelope metadata. Across five fresh sessions, median time from setup start to the first decrypted result is 928.95 ms; a warm full-client request has p50/p95/p99 254.85/527.97/582.91 ms. Including that one public-context transfer, effective application bytes per request are 8,061,790 for a one-request session, 1,137,568 for ten requests, and 445,146 for 100. Fresh key epochs randomize serialized key material, so independent envelopes in the artifact range from 7,692,232 to 7,695,823 bytes; the nearby single-run sizes are not expected to be byte-identical.

The socket experiment separates local transport from HE work. Persistent TCP loopback has p50 round trip 212.36 ms and median framing/transport overhead 0.498 ms. TLS~1.3 loopback has p50 217.77 ms and overhead 0.979 ms, so the median incremental record-layer cost is about 0.48 ms after connection setup. Its one observed self-signed handshake takes 1,064.51 ms and public-context initialization another 206.67 ms; neither belongs in the warm-request number. These measurements establish a real serialized TCP/TLS implementation, not a claim about a physical network.

The same request path also reveals where the design stops scaling smoothly.\label{subsec:scaling-results} The one-response format keeps one ciphertext while $K$ increases from 20 to 200, but server work still grows with the number of four-candidate groups. Table~\ref{tab:ckks-scaling} reports the direct sweep. Response framing grows by fewer than 900 bytes because it carries a longer score-slot map, while ciphertext count remains one. At $K=200$, median server time is 421.32 ms. The corresponding per-candidate method is 3485.44 ms, so packing remains an $8.27\times$ total-server improvement at that point.

\begin{table*}[t]
\caption{One-response CKKS scaling. Each non-headline point uses seven repetitions after two warmups; the $672/100$ point uses 20.}\label{tab:ckks-scaling}
\centering
\small
\resizebox{\textwidth}{!}{%
\begin{tabular}{@{}lrrrrrr@{}}
\toprule
Sweep & $L$ & $B$ & Groups & Server p50 (ms) & Server p95 (ms) & Response bytes \\
\midrule
$d'=192$, $K=100$ & 256 & 16 & 7 & 66.66 & 74.93 & 131,855 \\
$d'=384$, $K=100$ & 512 & 8 & 13 & 118.64 & 130.71 & 131,863 \\
$d'=672$, $K=100$ & 1024 & 4 & 25 & 224.46 & 247.29 & 131,876 \\
$d'=768$, $K=100$ & 1024 & 4 & 25 & 230.47 & 240.59 & 131,876 \\
\midrule
$d'=672$, $K=20$ & 1024 & 4 & 5 & 56.38 & 60.07 & 131,475 \\
$d'=672$, $K=40$ & 1024 & 4 & 10 & 99.23 & 104.07 & 131,575 \\
$d'=672$, $K=64$ & 1024 & 4 & 16 & 151.01 & 164.88 & 131,695 \\
$d'=672$, $K=80$ & 1024 & 4 & 20 & 165.02 & 189.95 & 131,775 \\
$d'=672$, $K=200$ & 1024 & 4 & 50 & 421.32 & 435.02 & 132,351 \\
\bottomrule
\end{tabular}
}
\end{table*}

Dimension affects both the reduction depth and occupancy. At $d'=192$, 16 candidates occupy one ciphertext, whereas $d'=672$ and 768 allow only four. This discontinuous $L$ and $B$ behavior is why a retrieval-specific projection can improve HE cost even when its raw arithmetic reduction seems modest. The observed maximum errors for $d'=192,384,672,768$ at $K=100$ are respectively $1.12\times10^{-5}$, $1.01\times10^{-5}$, $1.50\times10^{-5}$, and $1.39\times10^{-5}$; no monotonic error claim is warranted from seven repetitions. Figure~\ref{fig:kernel-scaling} shows the occupancy break and the smoother shortlist-size growth together.

\begin{figure*}[t]
\centering
\includegraphics[width=\textwidth]{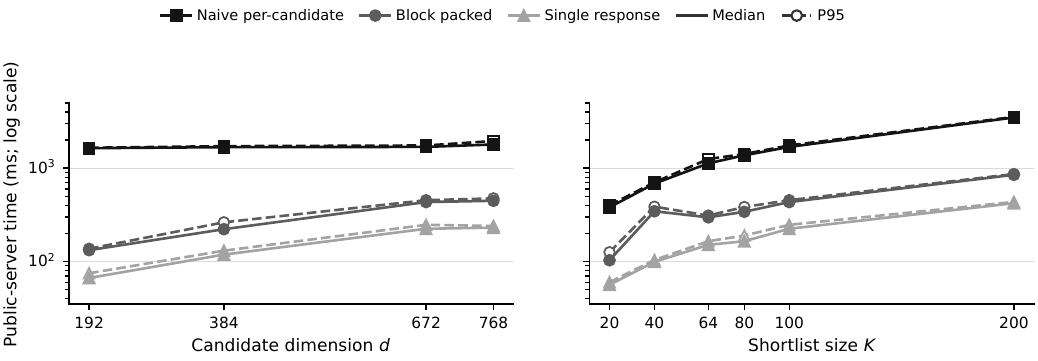}
\caption{Measured public-server work for the block-packed one-response kernel. Left: changing projected dimension at $K=100$ changes both reduction depth and ciphertext occupancy. Right: at $d'=672$, one response ciphertext remains sufficient as $K$ grows, while the number of four-candidate groups drives work. Solid lines show medians and dashed lines p95 values.}
\label{fig:kernel-scaling}
\end{figure*}

The client-side exhaustive PQ scan scales nearly linearly in the tested range. Table~\ref{tab:pq-scaling} uses 100 queries and five measured repetitions per query after five warmups. A least-squares fit of median milliseconds against document count has slope $2.4530\times10^{-5}$ ms/document, intercept 0.151 ms, and $R^2=0.99995$. This fit describes one Faiss build and host, not an asymptotic proof.

\begin{table}[t]
\caption{Client-side exhaustive PQ scan scaling for $K=100$, $M=96$. Times are ms; index bytes include codebooks and headers.}\label{tab:pq-scaling}
\centering
\small
\begin{tabular}{@{}rrrrr@{}}
\toprule
$N$ & Index bytes & p50 & p95 & p99 \\
\midrule
100,000 & 10,288,214 & 2.535 & 2.949 & 3.876 \\
250,000 & 24,688,214 & 6.315 & 7.019 & 8.910 \\
500,000 & 48,688,214 & 12.492 & 14.311 & 17.405 \\
1,000,000 & 96,688,214 & 24.642 & 27.791 & 30.711 \\
\bottomrule
\end{tabular}
\end{table}

The concurrency experiment isolates provider reranking and does not run concurrent million-entry client scans. No measured sharding or approximate coarse-quantizer index is included. Thus the data support linear local client cost, not a claim that candidate search is trivially distributed. At substantially larger $N$, persistent client storage and scan time may dominate and motivate graph search, inverted lists, PIR composition, or a different trust point.

\subsection{Vector-return baselines and measured leakage}\label{subsec:return-vector-results}

Speedup over naive HE is only half of the comparison. The more demanding alternative is to return candidate vectors directly. Table~\ref{tab:return-vector} reports warm-process post-candidate work over the same 400 frozen shortlists. These baselines gather and truly serialize the vectors, parse or dequantize them on the client, and rerank. They exclude network transfer and the preceding PQ scan, so their latency is not summed with PQ quantiles. The disclosure label is essential: float32 returns exact candidate vectors; float16 and int8 return near-exact reusable approximations.

\begin{table*}[t]
\caption{Strong vector-return baselines on 400 frozen $K=100$ test shortlists. Latency starts after candidate selection; for CKKS, projection and PQ time are removed per request before taking quantiles. Response excludes the 800-byte ID request. Vector-return methods disclose exact or near-exact vectors.}\label{tab:return-vector}
\centering
\small
\resizebox{\textwidth}{!}{%
\begin{tabular}{@{}lrrrrrr@{}}
\toprule
Returned representation & Response bytes & Post-candidate p50 & Post-candidate p95 & nDCG@10 & Top-10 set match & Disclosure \\
\midrule
Projected float32, $100\times672$ & 268,800 & 0.308 & 0.421 & .87207 & 1.0000 & exact vectors \\
Projected float16 & 134,400 & \textbf{0.261} & \textbf{0.294} & .87207 & .9925 & near-exact vectors \\
Projected row-int8 + scales & \textbf{67,600} & 0.352 & 0.443 & .87312 & .8300 & quantized vectors \\
Raw float32, $100\times768$ & 307,200 & 0.348 & 0.443 & .87528 & 1.0000 & exact vectors \\
Raw float16 & 153,600 & 0.405 & 0.455 & .87528 & .9900 & near-exact vectors \\
\midrule
One-response CKKS scores & 131,876 & 260.636 & 297.800 & .87115 & .9850 & scores only \\
\bottomrule
\end{tabular}
}
\end{table*}

Projected float16 is almost the same size as the encrypted response and has no observed IR-metric change relative to projected float32. Its mean score MAE is $4.18\times10^{-6}$ and top-10 exact-order match is 0.9775. Row-int8 reduces the response to 67,600 bytes, below CKKS, with mean score MAE $2.55\times10^{-4}$; it has top-1 match 0.99, top-10 set match 0.83, and exact-order match 0.37 relative to projected float32. Raw float16 has mean score MAE $1.14\times10^{-5}$ and exact-order match 0.95. These results prevent a misleading performance claim: vector return dominates computation and can dominate traffic. Packed CKKS is justified only when directly returning even quantized candidate vectors is outside the provider's chosen interface.

Post-candidate compute differs by less than 0.15 ms across the measured vector encodings, so the fixed protocol does not establish a meaningful performance ordering among them. The earlier integrated raw-float32 return path, which includes the million-entry PQ scan, has 24.73/28.12/30.35 ms p50/p95/p99 local end-to-end time and 307,200 response bytes; network transfer remains excluded. Either view leaves return vectors far faster locally than HE.

\begin{figure*}[t]
\centering
\includegraphics[width=\textwidth]{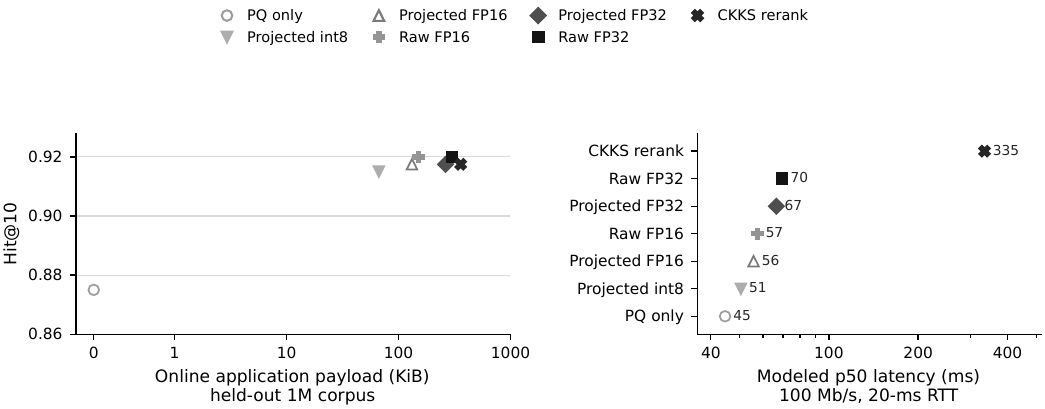}
\caption{Disclosure, payload, and ideal-link trade-off at $K=100$. Left: online application payload versus held-out Hit@10. Right: modeled p50 at 100 Mbit/s and 20 ms RTT, using each method's logged complete local p50 and payload. The model is not a network measurement; vector-return points disclose exact or near-exact candidate representations.}
\label{fig:tradeoff}
\end{figure*}

Figure~\ref{fig:tradeoff} keeps the performance and disclosure views together without converting its ideal-link calculation into a measured network result. That baseline also shows why disclosure has to be considered alongside speed.\label{subsec:pq-leakage-results} The leakage audit samples 100,000 document IDs without replacement using seed 2026. It decodes each public PQ code in projected space and also lifts the reconstruction through the public basis and passage mean to the original 768-dimensional space. Table~\ref{tab:pq-leakage} summarizes the result. Cosine is high in both spaces, especially after lifting, while relative $\ell_2$ error remains nonzero. The apparent improvement after lifting reflects the public affine reconstruction recipe; it must not be interpreted as a privacy gain.

\begin{table}[t]
\caption{Public-PQ reconstruction on 100,000 of one million documents.}\label{tab:pq-leakage}
\centering
\small
\begin{tabular}{@{}lrrrr@{}}
\toprule
Space & Mean cosine & p05 & p95 & Mean relative $\ell_2$ error \\
\midrule
Projected, 672-D & .86047 & .84061 & .88032 & .50993 \\
Lifted original, 768-D & \textbf{.96245} & .95580 & .96896 & .27117 \\
\bottomrule
\end{tabular}
\end{table}

The lifted cosine ranges from 0.93343 to 0.98521, with coordinate RMSE 0.00980. The projected cosine ranges from 0.80058 to 0.95142, with coordinate RMSE 0.01044. The audit does not run text inversion, membership inference, or attribute inference, so it cannot quantify those downstream risks. Its narrower conclusion is sufficient: a 96.7 MB public artifact is a high-fidelity lossy view of corpus geometry and must be included in the client-visible leakage.

The provider-side view is informative even before CKKS evaluation. Table~\ref{tab:candidate-leakage} fixes the deployed $K=100$ and reports both the million-vector held-out track and the frozen BEIR revision-analysis subsets. The set-only centroid recovers a direction with mean cosine between 0.3000 and 0.3897, but its more consequential use is linkage: two disjoint halves of the exposed set identify matching requests with AUC between 0.9951 and 1.0000 in this constructed audit. The order-aware estimates reproduce a substantial part of the exact neighbourhood despite never seeing a score.

\begin{table*}[t]
\caption{Semantic information retained by exposed $K=100$ candidate IDs. The provider uses only its exact vectors for those IDs. ``Set'' is the order-free centroid; the other columns use the disclosed order. Overlap is the fraction of the true exact top-10 recovered by the estimated query.}\label{tab:candidate-leakage}
\centering
\small
\resizebox{\textwidth}{!}{%
\begin{tabular}{@{}lrrrrr@{}}
\toprule
Collection & Queries & Set cosine & Set link AUC & Log-rank top-10 overlap & Ridge-rank top-10 overlap \\
\midrule
Million-vector held-out & 400   & .3897 & .999997 & .2755 & .2750 \\
ArguAna evaluable       & 1,119 & .3448 & .99716  & .5687 & .6134 \\
FiQA-2018               & 648   & .3461 & .99967  & .2907 & .3054 \\
NFCorpus                & 323   & .3149 & .99744  & .4368 & .5211 \\
SciDocs                 & 800   & .3625 & .99877  & .3916 & .4376 \\
SciFact                 & 300   & .3000 & .99512  & .4360 & .5420 \\
TREC-COVID              & 40    & .3502 & .99981  & .3475 & .1750 \\
\bottomrule
\end{tabular}
}
\end{table*}

The relevance proxy tells the same story. On the million-vector self-query track, the log-rank and ridge estimates recover the constructed source passage within top-10 for 52.0\% and 83.25\% of requests, compared with 92.75\% for the true query. On strict BEIR subsets, log-rank Hit@10 ranges from .3704 on FiQA to .9750 on TREC-COVID; ridge ranges from .4238 on SciDocs to .8250 on TREC-COVID. These numbers are not text inversion accuracy, and the source passage is easier to recognize than a free-form user intent. They nevertheless show why encrypted numerical slots cannot be promoted to semantic-query privacy once the candidate set is exposed.

The million-vector $K$ sweep also rules out a convenient monotonicity assumption. Log-rank exact top-10 overlap is .3925, .3248, .2755, and .2243 for $K=20,50,100,200$, while set-only link AUC is already .99943 at $K=20$ and approaches one thereafter. Permuting candidate IDs removes the rank signal and is still worth doing, but the set-only rows show that it does not provide unlinkability.

The return-ciphertext audit reaches a similarly narrow conclusion. Re-evaluating one fixed encrypted request 20 times produces one serialized response hash, as expected from the deterministic evaluator, whereas 20 fresh client encryptions of the same numerical query produce 20 distinct response hashes. The largest decoded score error in this diagnostic is $2.97\times10^{-5}$. Fresh input randomness therefore remains visible in ordinary correctness behaviour, but the provider adds no rerandomization, flooding, or output sanitization of its own. These observations support the explicit circuit-privacy non-claim; they are not a replacement for a proof.

The score interface is the second source of disclosure.\label{subsec:oracle-results} Rather than leave the linear-algebra limitation from Eq.~\eqref{eq:oracle-system} as a warning, we test it directly. For each of three projected rows (candidate IDs 42, 424242, and 999999), an experimental client submits the 672 standard basis vectors $e_j$ as separately encrypted $K=1$ requests. The decrypted response to $e_j$ is the $j$th coordinate of the target row. This produces 2016 actual encrypted queries through a spawned public-only server. The online client does not load exact rows; those rows are opened only after the protocol to audit reconstruction.

All three vectors are recovered almost exactly. Their cosine similarities with the exact projected rows are 0.99999999855, 0.99999999849, and 0.99999999847; relative $\ell_2$ errors are $9.96\times10^{-5}$, $9.80\times10^{-5}$, and $9.78\times10^{-5}$. Maximum coordinate errors range from $1.04\times10^{-5}$ to $1.40\times10^{-5}$. Extraction takes 30.26--31.84 s per vector and 92.87 s online in total. Per-query local-IPC latency has mean 46.07 ms and p50/p95/p99 43.89/58.03/84.69 ms because $K=1$ is much cheaper than the headline rerank.

The demonstration transfers 474,755,172 request bytes and 264,860,064 response bytes, or 739,615,236 application bytes bidirectionally; the cached 7,693,424-byte public context is excluded. It deliberately bypasses the prescribed text encoder and is therefore an adaptive malicious-client experiment, not an attack within the bounded-client claim. It does not refute confidentiality of query values from the provider. It does show that direct exact-index non-distribution does not imply document-vector or database privacy, and that rate limiting is an operational cost increase rather than a cryptographic remedy. Figure~\ref{fig:leakage} places this chosen-query result beside the two disclosure channels available before it.

\begin{figure*}[t]
\centering
\includegraphics[width=\textwidth]{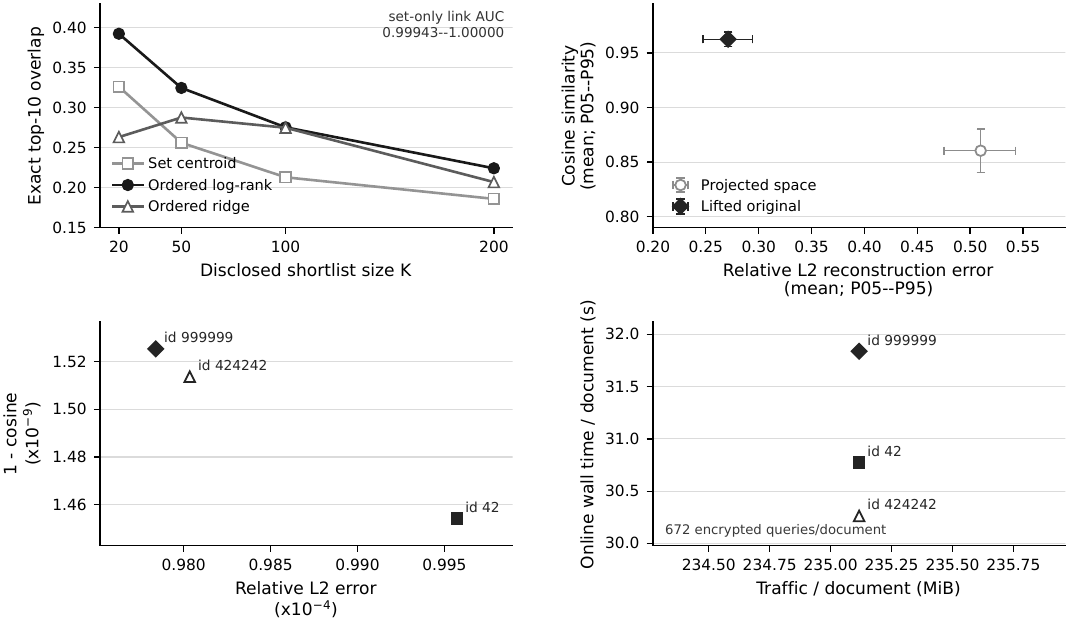}
\caption{Measured leakage is part of the system result. Candidate IDs reproduce part of the exact neighbourhood and make disjoint views almost perfectly linkable; public PQ reconstructs sampled projected and lifted vectors with substantial cosine similarity; and a malicious adaptive client recovers a fixed projected row from 672 chosen score queries. None is a claim of text inversion, but together they rule out semantic-query and database-confidentiality language for this protocol.}
\label{fig:leakage}
\end{figure*}

\subsection{Retrieval effectiveness on BEIR}\label{subsec:graded-results}

Having established what the interface costs and reveals, we finally ask whether its approximate arithmetic changes decisions on graded retrieval data. Table~\ref{tab:graded-official} reports only the frozen post-exploratory revision-analysis queries described in Section~\ref{subsec:implementation}. Projection from 768 to 672 dimensions has little effect at this operating point. PQ ordering is consistently weaker, while exact projected reranking of the same 100 candidates recovers much of the nDCG loss. Candidate Recall@100 remains visibly below exhaustive projected retrieval because reranking cannot recover a relevant document that never entered the shortlist. The CKKS column is not a plaintext surrogate: every evaluable record came from an encrypted query evaluated by a spawned public-only provider process, returned in one ciphertext, and decrypted by the client.

\begin{table*}[t]
\caption{Frozen post-exploratory BEIR revision-analysis utility at $K=100$. The first five metric columns are nDCG@10; the last two are Recall@100. ArguAna retains five absent-positive queries as zero rows in its canonical denominator. ``Actual CKKS'' summarizes the 1,119 evaluable ArguAna records and all other selected records from the full encrypted replay.}\label{tab:graded-official}
\centering
\scriptsize
\resizebox{\textwidth}{!}{%
\begin{tabular}{@{}lrrrrrrrr@{}}
\toprule
Dataset & Queries & Raw & Projected & PQ & PQ+plain & Actual CKKS & Projected R@100 & PQ R@100 \\
\midrule
SciFact & 300 & .69545 & .69678 & .66471 & .69699 & .69699 & .93600 & .91200 \\
NFCorpus & 323 & .32484 & .32703 & .29253 & .32721 & .32717 & .28654 & .26467 \\
ArguAna & 1,124 & .44434 & .44178 & .36819 & .44099 & .44050 & .95107 & .90480 \\
SciDocs & 800 & .16576 & .16614 & .13856 & .16491 & .16491 & .39206 & .32504 \\
FiQA-2018 & 648 & .38133 & .38012 & .28210 & .37774 & .37774 & .70111 & .58961 \\
TREC-COVID & 40 & .69564 & .68670 & .58035 & .69587 & .69595 & .12069 & .09288 \\
\midrule
Total & 3,235 & \multicolumn{7}{c}{3,230 evaluable actual-CKKS records selected; five canonical zero rows} \\
\bottomrule
\end{tabular}
}
\end{table*}

Table~\ref{tab:graded-ckks} isolates the CKKS replay and applies the frozen decision rule. Five collections satisfy both non-inferiority and two-sided equivalence within $\pm0.002$. ArguAna does not: its mean change is small, but the lower endpoint is $-0.002199$, just outside the margin. We therefore label that dataset inconclusive instead of converting a confidence interval that crosses zero into a claim of equivalence. The result is more limited than the earlier exploratory full-test statement, but it follows the explicitly frozen rule on query IDs disjoint from validation.

\begin{table*}[t]
\caption{Frozen post-exploratory actual-CKKS revision analysis. NI+EQ means that the paired 95\% interval is entirely inside $[-.002,.002]$; ``inconclusive'' means the criterion is not met. Latency covers evaluable local spawned-process records after candidate freezing.}\label{tab:graded-ckks}
\centering
\scriptsize
\resizebox{\textwidth}{!}{%
\begin{tabular}{@{}l rr l l rr@{}}
\toprule
Dataset & Canonical & Evaluable & $\Delta$nDCG@10 (95\% CI) & Decision & p50 (ms) & p95 (ms) \\
\midrule
SciFact & 300 & 300 & $0\ [0,0]$ & NI+EQ & 209.42 & 238.01 \\
NFCorpus & 323 & 323 & $-4.0645\times10^{-5}\ [-1.1115\times10^{-4},0]$ & NI+EQ & 211.18 & 245.18 \\
ArguAna & 1,124 & 1,119 & $-4.8915\times10^{-4}\ [-.0021990,.0011644]$ & inconclusive & 210.07 & 243.31 \\
SciDocs & 800 & 800 & $0\ [0,0]$ & NI+EQ & 210.29 & 240.75 \\
FiQA-2018 & 648 & 648 & $0\ [0,0]$ & NI+EQ & 211.34 & 242.29 \\
TREC-COVID & 40 & 40 & $+7.9425\times10^{-5}\ [0,.0002383]$ & NI+EQ & 211.17 & 229.92 \\
\bottomrule
\end{tabular}
}
\end{table*}

Across the revision-analysis subsets, p50 lies between 209.42 and 211.34 ms and p95 between 229.92 and 245.18 ms. Every response is exactly 131,876 application bytes and one SEAL ciphertext; mean request size differs by only a few bytes across datasets. The complete exploratory replay remains useful as an arithmetic and systems audit: all 3,722 evaluable requests were actually encrypted, and its largest single-query maximum score error is $2.72\times10^{-5}$. Exact top-10 order match in those full logs ranges from 0.9195 to 0.9840. Each server attestation reports a spawned process, read-only projected corpus, public context, no secret key, and no decryptor attribute. These replay latencies are lower than the million-vector end-to-end diagnostic because the graded replay deliberately starts from frozen projected queries and candidate IDs. Figure~\ref{fig:beir-utility} shows the metric decomposition and the candidate-recall ceiling without merging the six collections into one score.

\begin{figure*}[t]
\centering
\includegraphics[width=\textwidth]{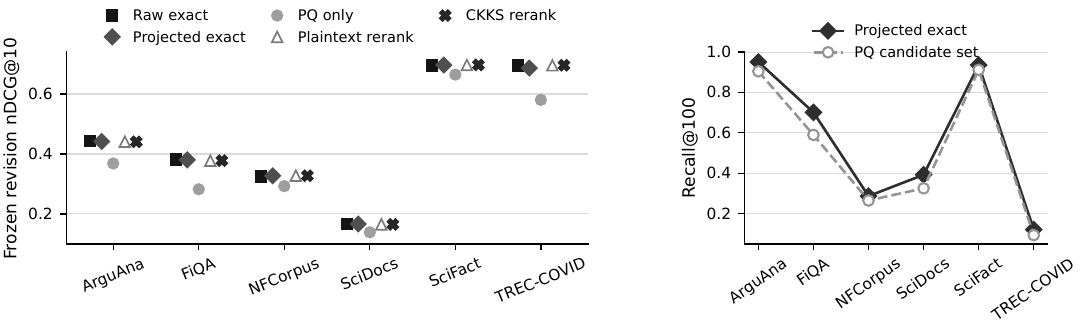}
\caption{Frozen post-exploratory BEIR revision-analysis utility of the $K=100$ public-PQ candidate set. Exact projected reranking recovers much of the PQ-ordering loss, while Recall@100 exposes the candidate ceiling. ArguAna uses the canonical denominator with five absent-positive zero rows.}
\label{fig:beir-utility}
\end{figure*}

\section{Discussion and limitations}\label{sec:discussion}

The measurements identify a specific operating point rather than a universal winner. Packing makes CKKS reranking practical under the stated boundary, but vector delivery remains faster and stronger private-retrieval systems disclose less. The useful question is therefore when this balance of latency, client storage, and visible information is acceptable.

The design is most plausible when three conditions hold. First, the client is allowed to possess a roughly 97 MB compressed view of a one-million-item index and the provider accepts the geometry leakage that follows. Second, candidate IDs may be exposed to the provider, so access-pattern privacy is not a requirement. Third, the provider regards direct delivery of candidate vectors as materially different from delivery of selected scores. Under those conditions, packed CKKS gives the provider a simple CPU circuit and the client one encrypted response.

If any condition fails, another design is preferable. If exact or quantized vector return is acceptable, Table~\ref{tab:return-vector} shows that it is decisively faster and can use less bandwidth. If the public artifact is unacceptable, this protocol does not solve the problem. If candidate IDs are sensitive, PIR/ORAM or an access-private system is required. If the provider host itself is untrusted with respect to documents, plaintext candidate vectors at the server violate the goal. The protocol is a point on a leakage/performance frontier, not an all-purpose private search layer.

The client-side artifact also changes update economics. New or deleted documents require synchronized codes and row IDs; a changed projection or PQ codebook requires a versioned rebuild. During an epoch, the provider must ensure that candidate IDs name the same exact rows the client's artifact represents. We log artifact hashes, but do not implement a transactional update protocol. Production systems would need signed manifests, rollback protection, and explicit handling of stale clients.

The performance result also needs to be read at two levels. Most of the $7.525\times$ total-server gain over naive CKKS comes from two mechanisms. Four candidate dot products share each primary homomorphic reduction, reducing 100 reductions to 25. Then one response avoids serializing 24 grouped ciphertexts or 99 naive ciphertexts. The HE-core-plus-pack gain is $3.307\times$, which is the more conservative number when transport serialization is cheap. The full server number is nevertheless relevant to an actual request because serialized ciphertext objects are the response.

The one-response stage is not free. It performs 25 additional plaintext mask multiplications and 54 extra rotations. It also raises the score scale to approximately $2^{59}$ without a second rescale and produces larger numerical error than the grouped result. This trade is favorable in the current shallow circuit and modulus chain, but it should be revalidated for any encoder dimension, scale, or backend. The abrupt occupancy changes in Table~\ref{tab:ckks-scaling} show that an apparently small dimension reduction can double $B$ and matter more than its percentage suggests.

The local end-to-end result should still be interpreted as a lower-level systems component. It contains real encryption, two processes, serialization, a full public-PQ scan, exact vector gathering, HE evaluation, and decryption. It does not contain query text encoding or a physical network. The separate TCP/TLS loopback experiment now replays the exact serialized objects over sockets and measures framing overhead, but loopback has no meaningful path RTT, loss, or congestion. Neither the 323.5-ms local end-to-end p95 nor the 217.8-ms TLS-loopback p50 is observed WAN latency. Conversely, adding a hypothetical RTT and bandwidth term to a stage p95 would not produce a valid request p95.

For planning only, the artifact provides a separately labeled ideal link model,
\begin{equation}
  T_{\mathrm{model}}=T_{\mathrm{local}}+\mathrm{RTT}+8P/R,
  \label{eq:link-model}
\end{equation}
where $P$ is logged application payload and $R$ is nominal link rate. This is not a socket measurement: it excludes TCP/TLS framing, handshake, congestion, retransmission, queueing, and final document fetch. Under 100 Mbit/s and 20 ms RTT, it maps the measured packed-CKKS p95 to 372.95 ms, versus 70.05, 59.17, and 53.98 ms for projected float32, float16, and int8 vector return. Under 1 Gbit/s and 20 ms, the corresponding modeled values are 346.45, 50.64, 49.44, and 49.05 ms. The model changes no conclusion and is not reported as observed network latency.

The disclosed channels suggest mitigations, although none turns this interface into access-private retrieval. Candidate permutation can prevent the provider from learning the PQ order but not the set. That distinction is material: the order-aware estimators improve ranking recovery, yet the set-only linkage AUC is already at least .9951 on every graded collection. Padding can conceal exact $K$ within a bucket but increases HE work. Batch timing, fixed-size framing, and key rotation can reduce some linkage channels, but repeated candidate-set overlap remains visible. Integrating PIR for candidate row access is conceptually possible, yet fetching plaintext vectors privately to an HE evaluator changes roles and can erase the simplicity of ciphertext--plaintext scoring.

The score oracle is more fundamental. Rate limiting and per-document quotas make a $672$-query linear system more costly but do not change its algebra. Constraining inputs to a certified encoder manifold might help, but proving well-formed encrypted inference would add a substantial circuit or trusted component. Returning only top-$r$ identities would reduce score leakage but requires encrypted comparison/selection or server decryption, again changing the system. Differentially private score noise could frustrate exact recovery at a utility cost. These are research directions, not hidden assumptions behind the current result.

Several limitations follow directly from this boundary. Candidate identifiers, request sizes, timing, and later payload-fetch identifiers are visible. This is not merely hypothetical metadata leakage: disjoint candidate-set views link requests almost perfectly in the measured audit, and simple estimators reproduce meaningful portions of the exact neighbourhood. The public PQ object exposes approximate corpus geometry, with lifted mean cosine 0.96245 on the measured sample, and the adaptive score interface can recover a projected row from a full-rank set of chosen queries. The returned CKKS object is not sanitized for circuit privacy. Meanwhile, the provider stores exact vectors in plaintext. The protocol consequently offers neither semantic-query privacy, document privacy, nor cryptographic database confidentiality, and it does not handle a malicious client, a compromised provider host, result substitution, integrity failures, or denial of service.

The systems evidence is broader than a single warm microbenchmark but remains local. Client storage grows linearly with the corpus, and the measured PQ scan is exhaustive; mobile deployment, sharding, transactional updates, stale-client recovery, and concurrent end-to-end client scans remain untested. Provider concurrency, warm/cold sessions, memory, persistent TCP, and TLS~1.3 are measured, but on one physical CPU and without a physical network. The WSL point exercises a Linux TenSEAL wheel on that same CPU with different Python and NumPy versions; it is a portability check, not a second-machine replication or an operating-system causal estimate. The RTX 5060 is used for embedding and reference computation, not CKKS acceleration. The loopback TLS certificate is a fixed self-signed measurement certificate, not a production authentication study, and query-encoder inference remains excluded from online latency.

Finally, CKKS is approximate. The largest observed error, $3.32\times10^{-5}$, is an empirical maximum rather than a bound for other encoders or parameter sets. Every completed lattice-estimator attack exceeds 128 bits under each stated cost model, but Arora--GB and coded-BKW timeouts prevent an exhaustive minimum, and the estimates do not prove the circular/KDM assumption for Galois material. The million-document self-retrieval setup establishes scale, not production relevance. Graded effectiveness comes from six BEIR collections; the five ArguAna qrels whose positive documents are absent from the frozen corpus remain zero rows in the canonical denominator, and that collection's frozen revision-analysis comparison is reported as inconclusive. None of these results should be generalized beyond the tested encoder, languages, corpora, parameters, and host.

These qualifications intentionally narrow the contribution. The paper asks whether block-packed encrypted reranking is practical when the candidate set and public compressed geometry are already disclosed. It does not claim that this point dominates stronger private-search systems.

\section{Conclusion}\label{sec:conclusion}

The experiments show that encrypted reranking need not inherit the obvious one-ciphertext-per-candidate cost. At $d'=672$ and $K=100$, block-SIMD grouping and final score packing reduce median server time from 1689.1 to 224.5 ms and replace 100 response ciphertexts totaling 13,121,661 bytes with one 131,876-byte object. In provider-only saturation over pre-encrypted requests, 16 isolated workers sustain 25.993 requests/s without a failed request; persistent TLS~1.3 loopback adds less than one millisecond of median framing and transport overhead after setup. The full public-only-provider replay supplies an actual CKKS record for each of 3,230 evaluable queries in the frozen post-exploratory revision subset. Against plaintext reranking of the identical candidate sets, five collections meet the frozen equivalence rule, while ArguAna remains inconclusive rather than being forced into a positive claim. The million-vector path and the SVD controls connect this kernel result to retrieval scale and explain why the selected projection is more than an arbitrary dimensionality reduction.

Taken together, the system is privacy-aware because it names and measures its leakage, not because SVD or PQ protects documents. The same candidate IDs that make the lightweight protocol possible are highly linkable and reveal part of the query neighbourhood; the public PQ codes expose corpus geometry, and an adaptive score client can recover projected rows. Under the restricted model, CKKS still keeps numerical query slots from the provider while block packing makes reranking practical. Stronger goals---hiding candidate identities, corpus geometry, evaluated-circuit information, or adaptive score access---require a different protocol rather than a broader interpretation of the present one.

\backmatter

\section*{Acknowledgements}

The author thanks the maintainers of Microsoft SEAL, TenSEAL, Faiss, E5, and BEIR for making reproducible systems evaluation possible.

\section*{Declarations}

\textbf{Funding.} No external funding was received for this study.

\textbf{Competing interests.} The author declares no competing interests.

\textbf{Ethics approval and consent to participate.} Not applicable. The study uses public text-retrieval datasets and model-derived numerical artifacts and involves no human participants or animals.

\textbf{Consent for publication.} Not applicable.

\textbf{Data availability.} SciFact, NFCorpus, ArguAna, SciDocs, FiQA-2018, and TREC-COVID are available from their cited BEIR/Hugging Face providers under the source licenses. The artifact freezes separate dataset and qrel commits, the E5 model revision, corpus/query ID hashes, and the five-query ArguAna missing-document audit. Derived embeddings, projection bases, and Faiss indexes use content-addressed cache keys and can be rebuilt; query-level retrieval and CKKS logs, manifests, and result summaries accompany the artifact.

\textbf{Code availability.} The public repository at \url{https://github.com/sergkurilenko/hybrid-privacy-aware-semantic-search} contains the source code, experiment scripts, commands for the six pinned official-test runs and six full actual-CKKS replays, cache validation, result summaries, per-query records, and hash checks for this revision.

\textbf{Author contributions.} Sergey Kurilenko conceived the study, implemented the system, conducted the experiments, analyzed the results, and wrote the manuscript.

\begingroup
\emergencystretch=2em
\Urlmuskip=0mu plus 2mu\relax
\bibliography{revised_refs}
\endgroup

\end{document}